\renewcommand{\P}{\mathcal{P}}
\newcommand{\ket}[1]{|#1\rangle}
\newcommand{\vecx}{\vec{x}}
\newcommand{\vecr}{\vec{r}}
\newcommand{\vecn}{{\vec{n}}}
\newcommand{\id}{\mathbbm{1}}
\renewcommand{\S}{\mathcal{S}}
\newcommand{\Q}{\mathcal{Q}}
\newcommand{\Qn}{\Q_\vecn}
\newcommand{\Sn}{\S_n}
\newcommand{\Sq}{\Sn^{\mbox{\tiny $\Q$}}}
\newcommand{\SqM}{\Sn^{\mbox{\tiny $\Q$},*}}
\newcommand{\SqkPrM}{\S^{\mbox{\tiny $\Q$},*}_{\mbox{\tiny $k$-pr.}}}
\newcommand{\SqkM}[1]{\S_{#1}^{\mbox{\tiny $\Q$},*}}
\newcommand{\I}{\mathcal{I}}
\newcommand{\Mnk}{\mathcal{M}^k_n}
\newcommand{\ubG}[1]{u_#1(\zeta_#1)}
\newcommand{\ub}[2]{u_#1(#2)}
\newcommand{\ubGg}[1]{\ub{{\vecn}}{#1}}
\newcommand{\ubg}{\ubGg{\zeta_{\vecn}}}
\newcommand{\nmax}{{n^*}}
\newcommand{\CorA}[1]{\mu_{#1}}
\newcommand{\CorAG}{\CorA{\vec{n}}}
\newcommand{\CorL}[1]{\zeta_{#1}}
\newcommand{\CorLG}{\CorL{\vec{n}}}
\newcommand{\proj}[1]{\left| #1 \right\rangle\!\!\left\langle #1 \right|}
\DeclareMathOperator{\tr}{tr}
\definecolor{nblue}{rgb}{0.2,0.2,0.9}
\definecolor{ngreen}{rgb}{0.2,0.6,0.2}
\definecolor{nred}{rgb}{0.8,0.2,0.2}
\definecolor{nblack}{rgb}{0,0,0}
\newtheorem{theorem}{Theorem}
\newtheorem{lemma}{Lemma}
\renewcommand{\L}{\mathcal{L}}
\begin{document}

\title{Family of Bell-like inequalities as \\ device-independent witnesses for entanglement depth}

\author{Yeong-Cherng~Liang}
\email{ycliang@mail.ncku.edu.tw}
\affiliation{Department of Physics, National Cheng Kung University, Tainan 701, Taiwan.}
\affiliation{Institute for Theoretical Physics, ETH Zurich, 8093 Zurich, Switzerland.}
\author{Denis Rosset}
\affiliation{Group of Applied Physics, University of Geneva, CH-1211 Geneva 4, Switzerland.}
\author{Jean-Daniel Bancal}
\affiliation{Centre for Quantum Technologies, National University of Singapore, 3 Science Drive 2, Singapore 117543.}
\author{Gilles P\"utz}
\affiliation{Group of Applied Physics, University of Geneva, CH-1211 Geneva 4, Switzerland.}
\author{Tomer Jack Barnea}
\affiliation{Group of Applied Physics, University of Geneva, CH-1211 Geneva 4, Switzerland.}
\author{Nicolas Gisin}
\affiliation{Group of Applied Physics, University of Geneva, CH-1211 Geneva 4, Switzerland.}

\date{\today}
\pacs{03.65.Ud, 03.67.Mn}

\begin{abstract}
We present a simple family of Bell inequalities applicable to a scenario involving arbitrarily many parties, each of which performs two binary-outcome measurements. We show that these inequalities are members of the complete set of full-correlation Bell inequalities discovered by Werner-Wolf-\.{Z}ukowski-Brukner. For scenarios involving a small number of parties, we further verify that these inequalities are facet-defining for the convex set of Bell-local correlations. Moreover, we show that the amount of quantum violation of these inequalities naturally manifests the extent to which the underlying system is genuinely many-body entangled. In other words, our Bell inequalities, when supplemented with the appropriate quantum bounds, naturally serve as device-independent witnesses for entanglement depth, allowing one to certify genuine $k$-partite entanglement in an arbitrary $n\ge k$-partite scenario without relying on any assumption about the measurements being performed, nor the dimension of the underlying physical system. A brief comparison is made between our witnesses and those based on some other Bell inequalities, as well as the quantum Fisher information. A family of witnesses for genuine $k$-partite nonlocality applicable to an arbitrary $n\ge k$-partite scenario based on our Bell inequalities is also presented.
\end{abstract}

\maketitle

One of the most important no-go theorems in physics concerns the impossibility to reproduce all quantum mechanical predictions using {\em any} locally-causal theory~\cite{Bell:Book} --- a fact commonly referred to as Bell's theorem~\cite{J.S.Bell:1964}. An important observation leading to this celebrated result is that measurement statistics allowed by such theories must satisfy  constraints in the form of an inequality, a {\em Bell inequality}. Since these inequalities only involve experimentally accessible quantities, their violation --- a manifestation of Bell-nonlocality~\cite{Brunner:RMP} --- can be, and has been (modulo some arguably implausible loopholes~\cite{Larsson:Loopholes}) empirically demonstrated (see, e.g.,~\cite{Brunner:RMP,Larsson:Loopholes,BellExp} and references therein). 

Clearly, Bell inequalities played an instrumental role in the aforementioned discovery.
Remarkably, they also find applications in numerous quantum information and communication tasks, e.g.,  in quantum key distribution involving untrusted devices~\cite{A.K.Ekert:PRL:1991,DIQKD, DIQKD2}, in the reduction of communication complexity~\cite{CommComplexity}, in the expansion of trusted random numbers~\cite{rand_colbeck, BIV:Randomness}, in certifying the Hilbert space dimension of physical systems~\cite{DimWitness,Navascues:PRX}, in self-testing~\cite{mayers,McKague:JPA:2012,Yang:PRA:2013,Swap,SelfTesting} of quantum devices, in witnessing~\cite{DIEW,K.F.Pal:1102.4320,GUBI} and quantifying~\cite{Verstraete:2002,SDIBE,Moroder:PRL:PPT,Toth:1409.3806} (multipartite) quantum entanglement using untrusted devices etc. For a recent review on these and other applications, see~\cite{Brunner:RMP}.

Identifying interesting or useful Bell inequalities is nonetheless by no means obvious. For instance, 
the approach of solving for the complete set of optimal, i.e., {\em facet-defining} Bell inequalities for a given experimental scenario --- though potentially useful for the identification of non-Bell-local (hereafter nonlocal) correlations --- typically produces a large number of inequalities with no apparent structure (see however~\cite{Rosset:JPA:2014} for some  progress made on classifying Bell inequalities). In contrast, carefully constructed Bell inequalities, such as the {\em families} of two-party Bell inequalities considered in~\cite{Pearle:PRD:1970, Braunstein:AP:1990,CGLMP,BKP}, have enabled us to conclude that certain correlations derived from maximally entangled states do not admit any {\em local content}~\cite{EPR2}, and that the prediction of quantum theory cannot be refined even when supplemented with hidden variables satisfying certain auxiliary assumptions~\cite{Colbeck:NC:2011}.

Going beyond the bipartite scenario, the family of Mermin-Ardehali-Belinskii-Klyshko (MABK) inequalities~\cite{MABK,Gisin:PLA:1998} is a prominent example of interesting family of Bell inequalities,
giving clear evidence that a macroscopic number of physical systems can still give rise to strongly nonclassical behavior (see also~\cite{RFFBIV,Wallman:PRA:2011}). Moreover, a sufficiently strong violation of the $n$-partite MABK inequalities can also be used to 
certify the presence of genuine $n$-partite entanglement in a {\em device-independent} manner, i.e., without relying on any assumption about the measurement device nor the dimension of the Hilbert space of the test systems (see, e.g.,~\cite{DIEW,K.F.Pal:1102.4320,GUBI,Nagata:PRL:2002,ANL}).

What about the possibility of identifying genuine $k$-partite entanglement in an $n$-partite scenario with $n>k>2$? This is the question of {\em entanglement depth}~\cite{Sorensen:PRL:2001}, or equivalently non-$k$-producibility~\cite{Guhne:NJP:2005} (see also~\cite{Seevinck:PRA:2001}), which both seek to identify the extent to which many-body entanglement is present in a multipartite quantum system. It is worth noting that genuine many-body entanglement is known to be essential, e.g.,  in achieving extreme spin squeezing~\cite{Sorensen:PRL:2001}, and also high sensitivity in some general metrology tasks~\cite{Fisher:ED}. For well-calibrated or trusted~\cite{Rosset:PRA:2012} measurement devices, there exist few criteria~\cite{Sorensen:PRL:2001,Seevinck:PRA:2001,Fisher:ED,Guhne:NJP:2005,Seevinck:PRA:2008,Durkin:PRL:2005,Levi:PRL:2013,Huber:PRA:2013,Lucke:PRL:2014}  to certify such  many-body entanglement. For example, an entanglement depth larger than 28 was recently demonstrated~\cite{Lucke:PRL:2014} using such a witness. However, the possibility of  certifying --- in a {\em device-independent} manner ---  genuine $k$-partite entanglement in an arbitrary $n\ge k>2$-partite scenario has remained elusive so far.

Here, we show that such robust certification in a scenario involving arbitrarily many parties is indeed possible --- using a novel family of $n$-partite Bell inequalities, and the characterized quantum violation of these inequalities by quantum states assuming only $k$-partite entanglement. Moreover, we show that, together with the appropriate bounds, these inequalities can also be used to witness genuine $k$-partite  nonlocality~\cite{GMNL} in an arbitrary $n$-partite scenario (with $n\ge k$). Since genuine $k$-partite entanglement is a prerequisite for the presence of genuine $k$-partite quantum nonlocality~\cite{Werner:PRA:1989,Curchod:IP}, witnesses for such multipartite nonlocality 
are also witnesses for entanglement depth. Let us stress, however, that our family of device-independent witnesses for entanglement depth  {\em do not} rely on the detection of such genuine multipartite nonlocality.

{\em A novel family of $n$-partite Bell inequalities}.- Consider a Bell-type experiment involving $n$ spatially-separated parties (labeled by $i\in\{1,2,\ldots,n\}$),  each of them performing two  binary-outcome measurements. We denote the measurement setting  of the $i$-th party by $x_i\in\{0,1\}$, and the corresponding measurement outcome  by $a_i=\pm1$. The correlation between these measurement outcomes can  be summarized succinctly using the collection of joint conditional probability distributions $\{P(\vec{a}|\vec{x})\}$ where $\vec{a}=(a_1,a_2,\ldots,a_n)$ and $\vec{x}=(x_1,x_2,\ldots,x_n)$ are, respectively, $n$-component vectors describing the combination of measurement outcomes and measurement settings. In terms of the $n$-partite full correlators
	$E_n(\vec{x})=\sum_{a_1,a_2,\ldots,a_n} \prod_{i=1}^n a_i P(\vec{a}|\vec{x})$,
our family of $n$-partite Bell inequalities $\I_n$ reads as:
\begin{align}\label{Ineq:Sliwa7n}
	\I_n: \Sn&=2^{1-n}\left[\sum_{\vec{x}\in\{0,1\}^n} E_n(\vec{x})\right]-E_n(\vec{1}_n)\,  \stackrel{\mbox{\tiny $\L$}}{\le} 1
\end{align}
where $\vec{1}_n=(1,\ldots,1)$ is an $n$-bit string of ones and $\L$ signifies that the inequality holds for a locally-causal theory. For $n=2$, inequality~\eqref{Ineq:Sliwa7n} is the Clauser-Horne-Shimony-Holt  Bell inequality~\cite{CHSH}; 
for $n=3$, it is equivalent to the 7-th tripartite inequality of~\cite{Sliwa}. For general $n$, we show in Appendix~\ref{App:Facet} that $\I_n$ defines a facet~\cite{Gruenbaum:Book} of the $n$-partite {\em full-correlation} polytope  characterized by Werner-Wolf-\.Zukowski-Brukner~\cite{WW,ZB}, thus being a member of the $2^{2^n}$ Bell inequalities discovered therein. For $n\le 8$, we can further verify numerically that  $\I_n$ corresponds to a facet of the polytope of locally-causal correlations ---  a property which we conjecture to hold true for general $n$.

From~\cite{WW}, it thus follows that the {\em maximal quantum violation} of $\I_n$ --- denoted by $\SqM$ --- is attainable if each party measures the $\pm1$-outcome observables~\cite{WW} $A_{x_i=0}=\cos\alpha\,\sigma_x + \sin\alpha\,\sigma_y$ and $A_{x_i=1}=\cos(\varphi_i+\alpha)\,\sigma_x + \sin(\varphi_i+\alpha)\,\sigma_y$ 
for some judiciously chosen $\alpha$, $\varphi_i\in[0, 2\pi]$  on the $n$-partite Greenberger-Horne-Zeilinger (GHZ) state~\cite{GHZ} $\ket{\text{GHZ}_n}= \frac{1}{\sqrt{2}}\left(\ket{0}^{\otimes n} + \ket{1}^{\otimes n} \right)$.
For  $n\le 8$, we certified using  a converging hierarchy of semidefinite programs~\cite{QMP:Hierarchy1,QMP:Hierarchy2} that $\SqM$ can be achieved by further setting $\varphi_1=\varphi_2=\cdots=\varphi_n=\phi_n$ and $\alpha=-\frac{n-1}{2n}\phi_n$ for some $\phi_n\in[0,\frac{\pi}{2}]$. Explicitly, this ansatz gives the  quantum value
\begin{subequations}\label{Eq:QuantumMaximum}
\begin{equation}\label{Eq:SqmMaxSimplified}
	\Sq(\phi_n)=2 \cos^{n+1}\frac{\phi_n}{2} - \cos\left(\frac{n+1}{2} \phi_n\right),
\end{equation}
where the explicit analytic values of $\phi_n$ (for $n\le 7$) leading to $\SqM$ can be found in Appendix~\ref{App:phic} (see  Table~\ref{Tbl:QuantumMax} for the corresponding value of $\SqM$).
For larger values of $n$, the above observation and further numerical evidences  lead us to conjecture that 
\begin{equation}\label{Eq:ConjectureOptimum}
	\SqM=\max_{\phi_n} \Sq(\phi_n).
\end{equation}
\end{subequations}
Indeed, for sufficiently large $n$, this maximum value over $\phi_n$ is well approximated by setting $\phi_n=\frac{2\pi}{n}$,  thus giving $\max_{\phi_n} \Sq(\phi_n)\stackrel{n\to\infty}{\to} 3$, i.e., the {\em algebraic maximum} of $\S_{\infty}$.\footnote{The algebraic maximum of $\Sn$ is the maximal value of  $\Sn$ attainable by all legitimate conditional probability distributions. As $n\to\infty$, the quantum violation~\eqref{Eq:QuantumMaximum} is thus as strong as that allowed by, for instance, signaling correlations.}

\begin{table}[h!]
\begin{ruledtabular}
\begin{tabular}{c|ccccccccc}
$n$  & $2$ & $3$
& $4$ & $5$ & $6$ & $7$ & $8$ & $\infty$
\\ \hline
 $\SqM$ &   $\sqrt{2}$  & $\frac{5}{3}$ & 1.8428  &  1.9746  &   2.0777  &   2.1610  &   2.2299 & 3 \\
 $v_{n,1}^{\mbox{\tiny Ent}}$ &  $\frac{1}{\sqrt{2}}$ &  $\frac{3}{5}$ &   0.5427 &  0.5064  & 0.4813 & 0.4627 & 0.4485 & $\frac{1}{3}$\vspace{0.1cm}\\ \hline
 $\Sn^{*}$ &   $2$  & $\frac{5}{2}$ & 2.7500  &  2.8750  &   2.9375  &   2.9688  &   2.9844 & 3 \\
\end{tabular}
\caption{\label{Tbl:QuantumMax}  
Summary of the maximal quantum violation and the  critical visibility  $v_{n,1}^{\mbox{\tiny Ent}}$, i.e.,  the infimum of $v_n$ in Eq.~\eqref{Eq:rhoV} before the mixture stops violating $\I_n$. Also included in the table is the algebraic maximum of $\I_n$, denoted by $\Sn^*$.
}
\end{ruledtabular}
\end{table}

{\em Entanglement depth and $k$-producibility}.- To see how $\I_n$, or more precisely its quantum violation can witness entanglement depth, let us now briefly recall the notion of $k$-producibility~\cite{Guhne:NJP:2005}: an $n$-partite pure  state $\ket{\psi}=\bigotimes_{j=1}^m \ket{\varphi_j}$ is said to be $k$-producible if all of its constituent  states $\ket{\varphi_j}$ are {\em at most} $k$-partite. Analogously, a mixed state $\rho$ is said to be $k$-producible if it can be written as a convex mixture of $k$-producible pure states --- the set of $k$-producible quantum states is thus convex. Evidently, the production of a $k$-producible state only requires (up to) $k$-partite entanglement. In the following, we say that a quantum state has an entanglement depth of $k$ if it is $k$-producible but not $(k-1)$-producible.

{\em A family of device-independent witnesses for entanglement depth}.- It is well-known that the observed Bell-inequality violation of a quantum state $\rho$ immediately implies that $\rho$ is entangled~\cite{Werner:PRA:1989}, and hence has an entanglement depth of 2 or higher. Moreover, from the convexity of the set of $k$-producible quantum states, we see that --- when there is no restriction on the Hilbert space dimension --- the set of correlations that is due to $k$-producible quantum states is also convex. In particular, since $k$-producibility implies $k'$-producibility for all $k'\ge k$, one expects that quantum states having a larger entanglement depth may also lead to a stronger violation of any given $n$-partite Bell inequality (e.g., $\I_n$): this is the central intuition behind what we call {\em device-independent witnesses for entanglement depth} (DIWED) --- a violation of which implies some lower bound on the entanglement depth of the underlying state. To this end, let us denote by $\SqkPrM$ the maximal quantum violation of $\I_n$ attainable by $n$-partite quantum states having an entanglement depth of $k$. In general, one may expect $\SqkPrM$ to depend on both $n$ and $k$, but the algebraic structure of $\Sn$, cf. Eq.~\eqref{Ineq:Sliwa7n}, allows us to show otherwise.

\begin{theorem}\label{Thm:Main}
The maximal possible quantum violation of $\I_n$ by  $k$-producible quantum states, $\SqkPrM$, is independent of $n$ and equals to $\SqkM{k}$, the maximal possible quantum violation of $\I_k$.
\end{theorem}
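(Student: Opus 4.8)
\emph{Strategy.} The plan is to recast the additive definition~\eqref{Ineq:Sliwa7n} into a multiplicative form adapted to the product structure of $k$-producible states. First I would note that, for any fixed choice of measurements, $\Sn$ is linear in the state, so by the convexity of the set of $k$-producible states (established above) the optimum is attained at a \emph{pure} $k$-producible state $\ket\psi=\bigotimes_{j=1}^m\ket{\varphi_j}$, where $\ket{\varphi_j}$ is supported on a group $G_j$ of $n_j\le k$ parties. Introducing, for each party $i$, the single-party operators $M_i=\tfrac12(A_{x_i=0}+A_{x_i=1})$ and $N_i=A_{x_i=1}$, the Bell operator of~\eqref{Ineq:Sliwa7n} equals $2\bigotimes_{i=1}^n M_i-\bigotimes_{i=1}^n N_i$, which on the product state gives
\begin{equation}
\Sn=2\prod_{j=1}^m\CorA{j}-\prod_{j=1}^m\CorL{j},
\end{equation}
with $\CorA{j}=\langle\bigotimes_{i\in G_j}M_i\rangle_{\varphi_j}$ and $\CorL{j}=\langle\bigotimes_{i\in G_j}N_i\rangle_{\varphi_j}$. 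Since $\|M_i\|\le1$ and $\|N_i\|=1$, each group obeys $|\CorA{j}|\le1$, $|\CorL{j}|\le1$, and the maximum of $2\CorA{j}-\CorL{j}$ over a single group of size $n_j$ is by definition $\SqkM{n_j}$.

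\emph{Lower bound.} The inequality $\SqkPrM\ge\SqkM{k}$ is immediate: take one group of $k$ parties saturating $\SqkM{k}$ and fill the remaining $n-k$ parties with trivial singleton groups, each choosing $A_{x_i=0}=A_{x_i=1}$ and a $+1$ eigenstate so that $\CorA{j}=\CorL{j}=1$. This padding multiplies both products by $1$, so the resulting $k$-producible state attains $\Sn=\SqkM{k}$ for every $n\ge k$. The same padding shows $\SqkM{p}\le\SqkM{p+1}$, so $\SqkM{n_j}\le\SqkM{k}$ for every group.

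\emph{Upper bound (the hard part).} The content of the theorem is $\SqkPrM\le\SqkM{k}$. One cannot bound the factors separately: driving $\prod_j\CorL{j}$ strongly negative while keeping $\prod_j\CorA{j}$ positive is forbidden by the quantum correlation, \emph{within each group}, between $\CorA{j}$ and $\CorL{j}$ (indeed the padding above is forced to have $\CorA{j}=\CorL{j}$), and a term-by-term estimate using only $2\CorA{j}-\CorL{j}\le\SqkM{k}$ fails. I would therefore characterize the two-dimensional region $R_p$ of achievable pairs $(\CorA{},\CorL{})$ for a group of size $p$ through its boundary $\ub{p}{\zeta}=\max\{\CorA{}:(\CorA{},\zeta)\in R_p\}$, which can be determined from the Werner--Wolf--\.{Z}ukowski--Brukner reduction to qubit measurements already used for $\SqM$ (cf.\ Eq.~\eqref{Eq:QuantumMaximum}). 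The goal is then to prove, from structural properties of $\ub{p}{\cdot}$ (monotonicity, its values at $\zeta=\pm1$, and concavity where appropriate) together with $|\CorA{j}|,|\CorL{j}|\le1$, that $2\prod_j\CorA{j}-\prod_j\CorL{j}\le\SqkM{k}$. A natural route is the closure lemma that the coordinate-wise product of two achievable pairs is again achievable by a single group of size $\le k$, so that the $m$ factors collapse by induction on $m$ to a single group, giving $\Sn\le\SqkM{k}$.

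\emph{Conclusion and main obstacle.} Combining the two bounds yields $\SqkPrM=\SqkM{k}$, manifestly independent of $n$. When $n_1+n_2\le k$ the closure step is trivial, since $\ket{\varphi_1}\otimes\ket{\varphi_2}$ already realizes $(\CorA{1}\CorA{2},\CorL{1}\CorL{2})$ on $n_1+n_2\le k$ parties. The genuine difficulty---which I expect to be the main obstacle---is the case of several groups of the \emph{maximal} size $k$, where the product acts as an anisotropic contraction of $R_k$ that must be re-absorbed into a single size-$k$ group; this cannot be settled with the one-dimensional bound $\SqkM{p}$ alone and requires real control of the full region $(\CorA{},\CorL{})$. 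That is the step I would expect to spend the most effort justifying rigorously.
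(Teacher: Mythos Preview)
Your setup and lower bound are exactly those of the paper: reduce to pure product states, rewrite the Bell operator multiplicatively as $2\prod_j\CorA{j}-\prod_j\CorL{j}$, and study the two-dimensional projection of the quantum set onto the $(\zeta,\mu)$-plane via its upper boundary $u_p(\zeta)$. You also correctly identify that the hard case is several groups of the maximal size $k$.

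Where you diverge is in the proposed resolution. Your ``closure lemma'' (that the coordinate-wise product of two achievable pairs lies again in $R_k$) is not what the paper proves, and the paper gives no evidence that it holds for $\zeta>0$. Instead the paper proves a \emph{restricted} statement (its Lemma~2): for $\CorLG\in[-1,0]$ one has $u_{\vecn}(\CorLG)=\max_i u_{n_i}(\CorLG)=u_k(\CorLG)$. The proof is short and uses only monotonicity of each $u_{n_i}$ together with $u_{n_i}(1)=1$: if $\prod_i\zeta_{n_i}<0$, flip signs in pairs until a single factor is negative, then push all positive factors to $1$. This already gives $\SqkM{\vecn}=\SqkM{k}$ \emph{provided} the maximizer of $2u_{\vecn}(\zeta)-\zeta$ lies in $[-1,0]$.

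That localization step is a second, separate argument that your outline does not anticipate, and it is where most of the work goes. The paper handles it by case analysis on $k$: for $k\ge6$ it is immediate since $\SqkM{6}>2$ while $2u_{\vecn}(\zeta)-\zeta<2$ for $\zeta>0$; for $k=2$ the full boundary $u_2(\zeta)=\cos^3\!\bigl(\tfrac{1}{3}\arccos\zeta\bigr)$ is derived analytically via an explicit sum-of-squares identity; but for $k=3,4,5$ the paper relies on \emph{numerically certified} values of $u_k(-1)$ and $u_k(0)$ (obtained from the NPA hierarchy) to check $u_k(0)-u_k(-1)<\tfrac12$, which by concavity forces the maximizer to $\zeta\le0$. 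So the ``real control of the full region'' you flag is discharged not by a purely structural closure argument, but partly by computer-assisted bounds on a few specific values of $u_k$. If you pursue your closure lemma instead, be aware that you would need it only on the half $\zeta\le0$ (where it coincides with the paper's Lemma~2), and that extending it to $\zeta>0$ is both unnecessary and unproven.
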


The full proof of the theorem can be found in Appendix~\ref{App:ProofUB}. Here, let us show that $\SqkPrM\ge\SqkM{k}$.
Consider $n$ parties sharing the quantum state $\ket{\mbox{GHZ}_k}\otimes\ket{0}^{\otimes n-k}$ with the first $k$ parties performing the {\em optimal} local measurements leading to $\SqkM{k}$ while the rest of the parties always measure  the trivial observable $\id$. It then follows from Eq.~\eqref{Ineq:Sliwa7n} and Born's rule that the quantum value of $\Sn$ becomes $\SqkM{k}$. Since this is only one particular choice of quantum strategy, we must have $\SqkPrM\ge\SqkM{k}$. For instance, it is conceivable that with non-trivial local measurements on $\ket{\mbox{GHZ}_2}^{\otimes 2}$, a stronger violation of $\I_4$ could be obtained.
Theorem~\ref{Thm:Main}, however, dictates that this intuition is false. Indeed, the proof of the theorem  (Appendix~\ref{App:ProofUB}) suggests that to achieve the strongest quantum violation of $\I_n$ by $k$-producible quantum states, we should employ the above strategy of generating  optimal nonlocal correlation for only $k$ of the parties while leaving the rest of the $n-k$ parties with trivial correlations. 

The above theorem, together with the respective values of $\SqkM{k}$  [cf. Eq.~\eqref{Eq:QuantumMaximum} and Table~\ref{Tbl:QuantumMax}], then provides us with a family of DIWED:
\begin{equation}\label{Eq:EDWitnesses}
	\I_n^k: 2^{1-n}\sum_{\vec{x}\in\{0,1\}^n} E_n(\vec{x})-E_n(\vec{1}_n)\,  \stackrel{\stackrel{\mbox{\tiny $k$-producible}}{\mbox{\tiny states}}}{\le} \SqkM{k}.
\end{equation}
Since the upper bound $\SqkM{k}$ holds for {\em all} $n$-partite, $k$-producible quantum states of arbitrary Hilbert space dimensions, and arbitrary binary-outcome measurements performed by each party, the witness is device-independent in the sense that any observed violation of $\I^k_n$ by $\rho$ implies that $\rho$ is at least genuinely $(k+1)$-partite entangled, i.e., has an entanglement depth of at least $k+1$, regardless of the details of the measurement devices and the Hilbert space dimensions. For instance, a measured quantum value of $\I_n$ that is greater  than $\sqrt{2}$ and $\tfrac{5}{3}$ (cf. Table~\ref{Tbl:QuantumMax}) immediately implies, respectively, the presence of genuine tripartite and quadri-partite entanglement, regardless of the total number of parties $n$. For the noisy GHZ state
\begin{equation}\label{Eq:rhoV}
	\rho(v_n) = v_n\proj{\mbox{GHZ}_n}+(1-v_n)\frac{\id_{2^n}}{2^n}, 
\end{equation}
where $\id_{2^n}$ is the identity operator acting on $\mathbb{C}^{2n}$, such quantum violations then translate to the critical visibility of $v_n> v_{n,k}^{\mbox{\tiny Ent}}=\SqkM{k}/\SqkM{n}$ required for the device-independent certification of genuine $(k+1)$-partite entanglement via $\I^k_n$.

Let us emphasize again that the certification of genuine $(k+1)$-partite entanglement via $\I_n^k$ does not rely on the detection of genuine $(k+1)$-partite  nonlocality~\cite{GMNL,Curchod:IP}. Indeed, as we show in Appendix~\ref{App:NS}, the witnesses for genuine multipartite nonlocality~\cite{GMNL} corresponding to $\I_n$ read as:
\begin{equation}\label{Eq:EDWitnesses:NS}
	\I_n^{k,\mbox{\tiny NL}}: 2^{1-n}\sum_{\vec{x}\in\{0,1\}^n} E_n(\vec{x})-E_n(\vec{1}_n)\,  \stackrel{\mathcal{NS}_{n,k}}{\le} 3-2^{2-k},
\end{equation}
where $\mathcal{NS}_{n,k}$ signifies that the inequality holds for arbitrary $n$-partite correlations that are $k$-producible~\cite{Curchod:IP} (when assuming only non-signaling~\cite{NS,Barrett:PRA:2005} resources within each group).
Interestingly, as with quantum entanglement, the right-hand-side of inequality~\eqref{Eq:EDWitnesses:NS}  is simply the algebraic maximum of $\I_k$, which is achievable by a general $k$-partite non-signaling correlation.\footnote{Since $\Sn$ only involves a linear combination of full correlators, inequality~\eqref{Eq:EDWitnesses} also holds true even if we consider, instead, $k$-producible Svetlichny~\cite{Svetlichny} (signaling) correlations; see~\cite{Curchod:IP}.} For $n\le 8$, the explicit values of these algebraic maxima ($\Sn^*=3-2^{2-n}$) are clearly higher than the corresponding quantum bounds (see Table~\ref{Tbl:QuantumMax}). Thus, witnessing genuine $k$-partite entanglement via $\I_n^k$ does not rely on the detection of genuine $k$-partite nonlocality.

{\em Comparison with some other witnesses for entanglement depth}.- Given the intimate connection~\cite{Guhne:NJP:2005} between  $k$-producibility and $m$-separability,\footnote{A pure state is  $m$-separable if it can be written as the tensor product of $m$ constituent pure states. The definition for mixed states proceeds analogously. Thus, an $m$-separable state is also  $k$-producible for some $k\ge\lceil \frac{n}{m}\rceil$.}one expects that DIWED can also be constructed from other multipartite Bell inequalities where their $m$-separability properties are well-studied. Indeed, a series of investigations~\cite{Gisin:PLA:1998,MABKSeparability}  on the MABK inequalities have culminated in the following characterization~\cite{Nagata:PRL:2002,Yu:PRL:2003}: the maximal possible quantum violation of the $n$-partite MABK inequality by  $n$-partite, $m$-separable states ($m<n$) consisting of $L$ unentangled subsystems is~\cite{Nagata:PRL:2002,Yu:PRL:2003} $2^{(n+L-2m+1)/2}$. It thus follows that (see Appendix~\ref{App:MABK}) if $n\le 5$, the MABK inequalities give the following DIWED:
\begin{equation}\label{Ineq:MABKWitness}
	 \Mnk:2^{\frac{1-n}{2}}\!\!\!\!\!\!\sum_{\vec{x} \in\{0,1\}^n}\!\!\!\!
	 \cos\left[\frac{\pi}{4}\left(1-n+2{\bf x}\right)\right]\!E\left(\vec{x}\right) \stackrel{\stackrel{\mbox{\tiny $k$-producible}}{\mbox{\tiny states}}}{\le} 2^{\frac{k-1}{2}},
\end{equation}
where ${\bf x}=\sum_i x_i$ and we have made use of the compact representation of the MABK inequality obtained in~\cite{Wallman:PRA:2011}. Unfortunately, for $n\ge 6$, except for $k=2$ and $k=n-1$, the inequality given in Eq.~\eqref{Ineq:MABKWitness} generally does not hold for $k$-producible states (see Appendix~\ref{App:MABK} for details).

To compare the strength of $\I^k_n$ and $\Mnk$ in identifying the entanglement depth of various quantum states, we numerically optimized the quantum violation of these witnesses for the GHZ state $\ket{\mbox{GHZ$_n$}}$, the $n$-partite W-state~\cite{Dur:PRA:062314} $\ket{\mbox{W$_n$}}=\frac{1}{\sqrt{n}}(\ket{100\ldots0}+\ket{010\ldots0}+\ldots+\ket{000\ldots1})$, as well as the $n$-partite 1-dimensional cluster states~\cite{Briegel:PRL:2001} $\ket{C^-_n}= \prod_{i=1}^{n-1}CZ_{i,i+1}\ket{+}^{\otimes n}$ and $\ket{C^o_n}= CZ_{n,1}^{1-\delta_{n,2}}\ket{C^-_n}$ where $CZ_{i,j}={\rm diag}(1,1,1,-1)$ is the controlled $Z$ gate acting on the the $i$-th and the $j$-th qubit, and $\delta_{i,j}$ is the Kronecker delta.  A comparison between the best quantum violations found (see Appendix~\ref{App:QuantumViolation}) and the respective bounds associated with the witnesses, cf. Eq.~\eqref{Eq:EDWitnesses} and Eq.~\eqref{Ineq:MABKWitness}, then allows us to lower-bound  the entanglement depth of these states via those DIWED\footnote{All these states are known have an entanglement depth of $n$ (see, e.g.~\cite{Hein:PRA:2004,Toth:PRL:2005}).} (see Table~\ref{Tbl:ED}).
 Interestingly, neither of these witnesses appears to be strictly stronger than the other as each of them provides a better lower bound on entanglement depth than the other witness for a certain state. Moreover, for $\ket{C^o_n}$, it can be shown that~\cite{Florian:PC} the lower bounds originating from these DIWED even outperform those obtained from the {\em non-device-independent witnesses} based on quantum Fisher information~\cite{Fisher:ED}.

		\begin{table}[h!]
		\begin{ruledtabular}
		\begin{tabular}{c|c|llllll}
		$\ket{\psi}$ & Witness(es) &   $\bf 2$ & $\bf 3$ & $\bf 4$ & $\bf 5$ & $\bf 6$ & $\bf 7$  \vspace{0cm} \\ \hline\hline
		 $\ket{\mbox{GHZ$_n$}}$ & $\I^k_n$, $\Mnk$ &    $2^*$  & $3^*$ & $4^*$  &  $5^*$  &   $6^*$ & $7^*$ \vspace{0cm}\\ \hline
		 $\ket{\mbox{W$_n$}}$ & $\I^k_n$ & $2^*$  & $2$ & 2  &  2  &   2  & 2  \vspace{0cm}\\ 
		 $\ket{\mbox{W$_n$}}$ & $\Mnk$  &   $2^*$  & $3^*$ & 3  &  3 &  3 & 3  \vspace{0cm}\\ \hline
		 $\ket{C^-_n}$ & $\I^k_n$, $\Mnk$ &   $2^*$  & $3^*$ & 2  &  2  &  2   & 2 \vspace{0cm}\\ \hline
		 $\ket{C^o_n}$ & $\I^k_n$ &   $2^*$  & $3^*$ & 2  &  2  & 2 & 2   \vspace{0cm}\\
		 $\ket{C^o_n}$ & $\Mnk$ &  $2^*$  & $3^*$ & 2  &  1  &  1 & 1   \vspace{0cm}\\
		\end{tabular}
		\caption{\label{Tbl:ED}Lower bounds on entanglement depth (ED) certifiable by the violation of DIWED $\I^k_n$ and $\Mnk$. In the rightmost block, the  boldfaced numbers in the top row gives $n$ (the number of parties) whereas all integers underneath are the respective lower bounds on ED for the quantum state given in the leftmost column, using the witness(es) indicated in the second column. A tight lower bound is marked with an asterisk ($^*$).}
		\end{ruledtabular}		
		\end{table}

{\em Discussion}.- Obviously, for any given $n$ and $k$, the set of correlations that can arise from $k$-producible quantum states cannot be fully characterized by $\I_n^k$ alone (nor together with  $M_n^k$).\footnote{In general, each of these sets can only be fully characterized by an infinite number of such linear witnesses~\cite{Curchod:IP}.} Thus, one may ask if there exists a better DIWED, e.g.,  that closes the gap between the actual entanglement depth of $\ket{C_n^o}$ and the lower bound provided in Table~\ref{Tbl:ED}. To answer this, or more generally, the question of whether some observed correlation $\{P(\vec{a}|\vec{x})\}$ could have come from a $k$-producible quantum state, the hierarchy of semidefinite programs proposed in~\cite{Moroder:PRL:PPT} turns out to be well-suited. For completeness, we include the explicit form of these semidefinite programs  in Appendix~\ref{App:SDP}. Using this technique, it was found in~\cite{Curchod:JPA:2014} that all the 23,306 quadripartite Bell-like inequalities obtained therein are also legitimate DIWED for an entanglement depth of 2, some even for an entanglement depth of 3. Moreover, our numerical optimizations show that some of these inequalities can also be used to device-independently certify the genuine 4-partite entanglement present in $\ket{\mbox{W$_4$}}$, $\ket{C^-_4}$ and $\ket{C^o_4}$. Is it then always possible to find an appropriate DIWED to certify the entanglement depth of {\em any} pure entangled quantum state? Given the strong connection between nonlocality and pure entangled states (see, e.g.,~\cite{N.Gisin:PLA:1992,S.Popescu:PLA:1992,Yu:PRL:2012}), we are optimistic that the answer to the above question is positive.

Let us now comment on some other possibilities for future work. Naturally, a question that stems from our results is the typicality of Bell inequalities that are naturally suited for witnessing entanglement depth, in the sense of Theorem~\ref{Thm:Main}. To this end, we show in Appendix~\ref{App:OtherBellIneq} that  the family of DIWED given in Eq.~\eqref{Eq:EDWitnesses} actually belongs to an even more general family of DIWED --- $\I_n^k(\gamma)$ --- such that $\I_n^k(2)$ gives Eq.~\eqref{Eq:EDWitnesses}. The usefulness of this more general family of DIWED, however, remains to be investigated. Note also that apart from $\gamma=2$,  none of the Bell inequalities corresponding to $\I_n^k(\gamma)$ define a facet of the local polytope for general $n$. In contrast, as we show here, the combination of full correlators given by $\Sn$, cf. Eq.~\eqref{Ineq:Sliwa7n}, are natural both in the characterization of the set of locally-causal correlations, as well as the set of correlations allowed by $k$-producible quantum states, for arbitrary $k>1$.

On the other hand, since our witnesses involve the expectation value of $2^n$  different combinations of measurement settings, measuring these expectation values using only local measurement presents a great experimental challenge already for moderate values of $n$.\footnote{Although this scaling is still favorable compared with doing a full-state tomography of an $n$-qubit state.} Hence, it is certainly worth looking for other (families of) Bell inequalities where the corresponding DIWED only involve few expectation values but which may still share features of $\I_n$ given in Theorem~\ref{Thm:Main}. The families of Bell-like inequalities presented in~\cite{GUBI,Mermin:CGLMP} are some possible starting points for such an investigation and the numerical techniques that we detailed in Appendix~\ref{App:SDP} will be useful for this purpose. Note also that for any given positive integer $k$, Theorem 2 of~\cite{Curchod:IP} allows us to extend any given witness for $n\ge k$ parties to one for {\em arbitrarily} many parties while preserving the number of expectation values that need to be measured experimentally. From an experimental perspective, it will also be highly desirable to identify DIWED that only involve few-body correlators (cf. Bell inequalities given in~\cite{Jura:Science}), a problem that we leave for future research.


This work is supported by the  Swiss NCCR ``Quantum Science and Technology", the CHIST-ERA DIQIP, the ERC grant 258932, the Singapore National Research Foundation (partly through the Academic Research Fund Tier 3 MOE2012-T3-1-009) and the Singapore Ministry of Education. We gratefully acknowledge Florian Fr\"owis for enlightening discussion, and for sharing his computation of a lower bound on entanglement depth using quantum Fisher information.


\appendix

\section{Proof that inequality~\eqref{Ineq:Sliwa7n} defines a facet of the local full-correlation polytope}
\label{App:Facet}

Inequality~\eqref{Ineq:Sliwa7n} can also be rewritten in the form
\begin{equation}
	\Sn=\sum_{\vec{x}\in\{0,1\}^n} \beta(\vecx)E(\vec{x})\le 1
\end{equation}
where $\beta(\vecx)=2^{1-n}-1$ when $\vecx=\vec{1}$ is the $n$-bit string of ones and $\beta(\vecx)=2^{1-n}$ otherwise. In what follows, we show that this inequality is indeed one of the $2^{2^n}$ full-correlation Bell  inequalities derived by Werner and Wolf~\cite{WW}, and hence a facet of the local full-correlation polytope for arbitrary $n$.

To prove the above claim, it is sufficient to prove that the function
\begin{equation}\label{eq:fr}
	f(\vecr)=\sum_{\vecx\in\{0,1\}^n}\beta(\vecx)(-1)^{\vecr\cdot\vecx}
\end{equation}
is indeed a $\pm1$-valued function of the $n$-bit string $\vecr$, as shown in~\cite{WW}. Let us start by proving the following Lemma.
\begin{lemma}
For any given $n$-bit string $\vecr$, it holds that $\sum_{\vecx} (-1)^{\vecr\cdot\vecx}=2^n\,\delta_{\vecr,\vec{0}}$. In other words, the sum vanishes unless $\vecr$ is exactly $n$ bits of 0.
\end{lemma}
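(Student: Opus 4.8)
The plan is to exploit the additivity of the dot product in the exponent. Writing $\vecr\cdot\vecx=\sum_{i=1}^n r_i x_i$ (with the understanding that the exponent of $(-1)$ is read modulo $2$), the summand $(-1)^{\vecr\cdot\vecx}$ factorizes as $\prod_{i=1}^n(-1)^{r_i x_i}$. Since the sum over $\vecx\in\{0,1\}^n$ ranges independently over each bit $x_i\in\{0,1\}$, the full sum factorizes into a product of $n$ single-bit sums:
\begin{equation}
\sum_{\vecx\in\{0,1\}^n}(-1)^{\vecr\cdot\vecx}=\prod_{i=1}^n\sum_{x_i\in\{0,1\}}(-1)^{r_i x_i}=\prod_{i=1}^n\left[1+(-1)^{r_i}\right].
\end{equation}

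Next I would evaluate each factor separately. If $r_i=0$ then $1+(-1)^{r_i}=2$, whereas if $r_i=1$ then $1+(-1)^{r_i}=0$. Hence the product is nonzero only when every bit satisfies $r_i=0$, i.e. when $\vecr=\vec{0}$, in which case all factors equal $2$ and the product is $2^n$; otherwise at least one factor vanishes and the entire product is $0$. This is exactly the claimed identity $\sum_{\vecx}(-1)^{\vecr\cdot\vecx}=2^n\,\delta_{\vecr,\vec{0}}$.

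There is essentially no hard step here: the statement is the standard orthogonality relation for the characters $\vecx\mapsto(-1)^{\vecr\cdot\vecx}$ of the group $(\mathbb{Z}/2\mathbb{Z})^n$. The only point meriting a moment's care is the interchange of the finite sum over $\vecx$ with the product over coordinates, which is justified simply because $\{0,1\}^n$ is the $n$-fold Cartesian product of $\{0,1\}$, so summing over $\vecx$ coincides with summing independently over $x_1,\ldots,x_n$. Alternatively one could argue by a trivial induction on $n$, peeling off the last bit $x_n$ to reduce the $n$-bit sum to $\left[1+(-1)^{r_n}\right]$ times the corresponding $(n-1)$-bit sum, but the factorization above is more transparent and avoids any inductive bookkeeping.
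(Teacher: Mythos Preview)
Your proof is correct and rests on the same idea as the paper's: exploit the product structure of $\{0,1\}^n$ to factor the sum over $\vecx$. The only cosmetic difference is that the paper peels off a single coordinate $x_i$ with $r_i=1$ to exhibit one vanishing factor, whereas you carry out the full factorization $\prod_i\bigl[1+(-1)^{r_i}\bigr]$; both arguments are equally valid and of the same length.
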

\begin{proof}
For an $n$-bit string that is not identically 0, let us suppose, without loss of generality, that the first bit of $\vecr$, i.e., $r_1$ is 1, we see that
\begin{align}
	\sum_{\vecx\in\{0,1\}^n} (-1)^{\vecr\cdot\vecx}=&\sum_{x_1=0,1}\sum_{x_2,\ldots,x_n=0,1} (-1)^{r_1x_1+r_2x_2+\ldots +r_nx_n},\nonumber\\
	=&\sum_{x_1=0,1}(-1)^{x_1}\!\!\!\sum_{x_2,\ldots,x_n=0,1} (-1)^{r_2x_2+\ldots +r_nx_n},\nonumber\\
	=&\,0.
\end{align}
In contrast, if $\vecr=\vec{0}=(0,0,\ldots,0)$, it is easy to see that the sum is $2^n$.
\end{proof}

Now, let us prove that $f(r)$ as defined above indeed only takes values $\pm1$. To this end, let us denote by $\pi_{\vecr}$ the parity of $\vecr$, i.e., $\pi_{\vecr}=\oplus_{i} r_i$, and note that
\begin{equation}
\begin{split}
	f(\vecr)&=\beta(\vec{1})(-1)^{\pi_{\vecr}}+\beta(\vec{0})\sum_{\vecx\in\{0,1\}^n}(-1)^{\vecr\cdot\vecx}-\beta(\vec{0})(-1)^{\pi_{\vecr}}\\
			&=\left[\beta(\vec{1})-\beta(\vec{0})\right](-1)^{\pi_{\vecr}}+\beta(\vec{0})\,2^n\delta_{\vecr,\vec{0}}\\		
			&=-(-1)^{\pi_{\vecr}}+2\,\delta_{\vecr,\vec{0}}.		
\end{split}
\end{equation}
Hence,  $f(\vec{0})=1$, and $f(\vec{r}\neq\vec{0})=-1^{\pi_{\vecr+1}}$, i.e., $f(\vec{r})=\pm1$. Thus $\I_n$ is indeed a member of the $2^{2^n}$ full-correlation Bell inequality found by Werner and Wolf~\cite{WW}.

\section{Analytic expressions of $\phi_n$ and $\SqM$ }
\label{App:phic}

In Table~\ref{Tbl:Analytic} below, we provide, for $n\le 5$, the analytic expression of the  {\em optimal} $\phi_n$ that leads to the maximal quantum violation $\SqM$  of inequality~\eqref{Ineq:Sliwa7n}.

\begin{table}[h!]
\begin{ruledtabular}
\begin{tabular}{r|cc}
$n$ &  $\phi_n$  & $\SqM$ 
\\ \hline
  2 &   $\frac{\pi}{2}\approx1.5708$ & $\sqrt{2}$ \\ 
  3 &   $2\cos^{-1}\sqrt{\frac{2}{3}}\approx1.2310$ & $\frac{5}{3}$ \\
  4 &   $2 \cos^{-1}\sqrt{\frac{1}{14} (6 +  \sqrt{22} )}\approx1.0155$ & $\frac{2}{7} \sqrt{\frac{2}{7} (94 + 11 \sqrt{22})}$ \\
  5 &   $2 \cos^{-1}\sqrt{\frac{1}{15} (8 +  \sqrt{19} )}\approx0.8660$ & $\frac{1}{225} \sqrt{113 + 76 \sqrt{19}}$ \\  
\end{tabular}
\caption{\label{Tbl:Analytic} Analytic expressions of the optimal value of $\phi_n$ and the corresponding maximal quantum violation of inequality~\eqref{Ineq:Sliwa7n} for $n\le 5$.}
\end{ruledtabular}
\end{table}
For some higher values of $n$, it is still possible to solve for the optimal analytic value of $\phi_n$ in Eq.~\eqref{Eq:QuantumMaximum} using some trigonometric identities.
However, the resulting analytic expressions for $\phi_n$ and $\SqM$ quickly become very cumbersome and are not particularly insightful. For instance, with the help of Mathematica, we obtain the following analytic expressions of the optimal $\phi_6$ and $\phi_7$:
\begin{equation*}
\begin{split}
	\phi_6& =2 \cos^{-1}\sqrt{\tfrac{4}{93} \left[10+11\cos\left(\tfrac{1}{3}\tan^{-1}\tfrac{93\sqrt{186735}}{14107}\right) \right]}\\
	&\approx0.7559
\end{split}
\end{equation*}
and
\begin{equation*}
\begin{split}
	\phi_7& =2 \cos^{-1}\left\{\tfrac{2}{3}\sqrt{ \tfrac{1}{7}\left[8+\sqrt{46}\cos\left(\tfrac{1}{3}\tan^{-1}\tfrac{63\sqrt{383}}{193}\right)\right]}\right\}\\
	&\approx0.6713.
\end{split}
\end{equation*}

\section{Proof of Theorem~\ref{Thm:Main}}
\label{App:ProofUB}

Here, we give a proof of Theorem~\ref{Thm:Main} in the main text. Firstly, in Appendix~\ref{App:Notations}, we reformulate the problem in terms of some new notations introduced therein. Then, in Appendix~\ref{App:2DSpace}, we give some preliminary characterization of a 2-dimensional projection of the set of quantum correlations. A technical lemma that allows us to relate the set of quantum correlations admissible by $k$-producible quantum states and general $k$-partite quantum states in this 2-dimensional projection is provided in Appendix~\ref{App:Lemma}. The proof is then completed with some further characterizations of the set of quantum correlations in this 2-dimensional projection given in Appendix~\ref{App:FurtherCharacterization}.

\subsection{Notations and reformulation of the problem}
\label{App:Notations}
Let us define:
\begin{equation}\label{Eq:Axes}
	\mu_\ell\equiv \frac{1}{2^\ell}\sum_{\vec{x}\in \{ 0,1 \}^{\ell}} E_\ell(\vec{x}),\quad \zeta_\ell\equiv E_\ell(\vec{1}_\ell).
\end{equation}
For quantum correlations, these quantities can be expressed as
\begin{gather}
	\mu_{\ell}=\frac{1}{2^\ell} \sum_{\vec{x}\in \{ 0,1 \}^{\ell}} \tr\,\left(\rho\, A_{x_1}\otimes A_{x_2}\otimes \cdots \otimes A_{x_\ell}\right),\nonumber\\
	\zeta_{\ell}= \tr\,\left(\rho\, A_{x_1=1}\otimes A_{x_2=1}\otimes \cdots \otimes A_{x_\ell=1}\right),
	\label{Eq:QuantumCor}
\end{gather}
where $\rho$ is a quantum state and $\{\{A_{x_j}\}_{x_j=0,1}\}_{j=1}^\ell$ are dichotomic observables satisfying $A_{x_j}^2=\id$.

Since the Bell inequality $\I_n$ is a linear function of the expectation value $E_n(\vecx)$, and each $E_n(\vecx)$ is a linear function of $\rho$, we may without loss of generality consider pure quantum state $\rho=\proj{\psi}$ in determining the maximal value of the inequality attainable by ($k$-producible) quantum states. For a pure $n$-partite state $\ket{\psi_\vecn}=\bigotimes_{j=1}^m \ket{\varphi_j}$ where each constituent pure state  is itself $n_j$-partite, let us define $\vecn=(n_1,\ldots,n_m)$. We thus have
\begin{equation}\label{Eq:Axes2}
	\mu_{\vecn}=\prod_{i=1}^m \mu_{n_i}\quad \zeta_{\vecn} = \prod_{i=1}^m \zeta_{n_i},
\end{equation}
where we have abused the notations and used $\mu_{\vecn}$, $\zeta_\vecn$ to denote the analog of Eq.~\eqref{Eq:Axes} for such factorizable states.\footnote{
When the state is not factorizable, it should be understood that $\vecn={n}$, and thus  $\mu_{\vecn}=\mu_n$, $\zeta_\vecn=\zeta_n$.}

Let us denote by $\SqkM{\vecn}$ the maximal possible quantum violation of $\I_n$ achievable by quantum states having the tensor-product structure specified by  $\vecn$. In the notations introduced above, we have
\begin{equation}\label{Eq:vecnMax}
	\SqkM{\vecn}=\max_{\zeta_{\vecn},\mu_{\vecn}}\quad 2\mu_{\vecn}-\zeta_{\vecn}.
\end{equation}
The maximal quantum violation of $\I_n$ achievable by $n$-partite, $k$-producible states is thus:
\begin{equation}\label{Eq:k-prodMax}
	\SqkPrM=\max_\vecn \SqkM{\vecn}
\end{equation}
under the assumption of
\begin{equation}\label{Eq:nmax}
	\nmax\equiv\max_i\, n_i=k,
\end{equation}
whereas the maximal quantum violation of $\I_k$ reads as:
\begin{equation}
	\SqkM{k}=\max_{\zeta_{k},\mu_{k}}\quad 2\mu_{k}-\zeta_{k}.
\end{equation}
In this terminology, a proof of Theorem~\ref{Thm:Main}  thus consists of showing that $\SqkPrM=\SqkM{k}$, or more explicitly,
\begin{equation}\label{Eq:Goal}
	\max_\vecn \max_{\zeta_{\vecn},\mu_{\vecn}}\quad 2\mu_{\vecn}-\zeta_{\vecn} = \max_{\zeta_\nmax,\mu_\nmax}\quad 2\mu_\nmax-\zeta_\nmax,
\end{equation}
under the assumption of Eq.~\eqref{Eq:nmax}.
Note that the maximizations  on both sides of Eq.~\eqref{Eq:Goal} are to be carried out over all legitimate pairs of quantum distributions $(\zeta_{\vecn},\mu_{\vecn})$ and $(\zeta_\nmax,\mu_\nmax)$, respectively, i.e., distributions satisfying both Eq.~\eqref{Eq:QuantumCor} and Eq.~\eqref{Eq:Axes2}. 

\subsection{Preliminary characterization of quantum correlations in the 2-dimensional space $(\zeta_{\vecn}, \mu_{\vecn})$}
\label{App:2DSpace}

Clearly, the maximizations involved in Eq.~\eqref{Eq:Goal} require some level of characterization of the 2-dimensional projection of the set of quantum distributions defined by $(\zeta_{\vecn}, \mu_{\vecn})$ for general $\vecn$. Let us denote this set by $\Qn$. A few remarks about $\Qn$ are now in order (see Figure~\ref{Fig:2dSlice}).

\begin{figure}[h]
  \includegraphics{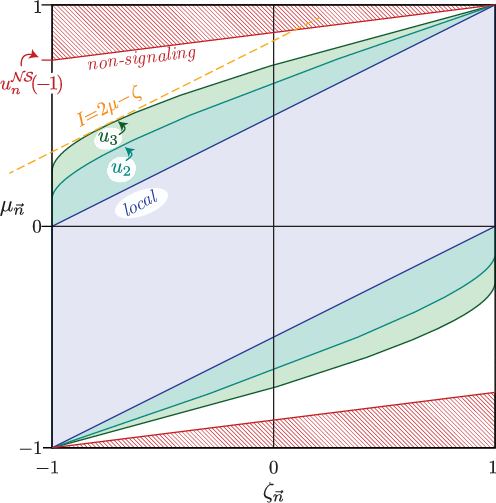}
  \caption{\label{Fig:2dSlice} A 2-dimensional projection of the sets of correlations onto the $(\CorLG,\CorAG)$-plane. The set of Bell-local (i.e., 1-producible) correlation is the parallelogram defined by the four extreme points $(-1,-1)$, $(-1,0)$, $(1,1)$ and $(1,0)$, whereas the set of $n$-partite correlations constrained only by the non-signaling conditions is the larger parallelogram defined by the four extreme points 
   $(-1,-1)$, $(-1, u_n^{\mbox{\tiny $\mathcal{NS}$}}(-1))$, $(1,1)$ and $(1, -u_n^{\mbox{\tiny $\mathcal{NS}$}}(-1))$.
  }
  
\end{figure}

\begin{enumerate}
	\item $\Qn$ is a convex set\footnote{This follows from Eq.~\eqref{Eq:QuantumCor}, Eq.~\eqref{Eq:Axes2} and the definition of a convex set. The proof is analogous to the proof of the convexity of $\Q_n$ (see, for instance, Ref.~\cite{Liang:Thesis}.)} and is invariant under reflection about the origin $(0,0)$.
	\item $\Qn$ is a superset of the set of local correlations and thus contains the points $(-1,0)$, $(1,0)$ as well as the {\em extreme} points $(1,1)$ and $(-1,-1)$.
	\item\label{Prop:Mono} For any given  $\CorLG$, the boundary of $\Qn$ is specified by
		\begin{equation}
		  -\ubGg{-\zeta_{\vecn}} \le  \CorAG \le  \ubG{{\vecn}} ,
		\end{equation}
		with $\ubG{{\vecn}}: [ -1,1 ] \rightarrow [ 0,1 ]$ being a concave, continuous function satisfying
		$ \ubGg{-1}\ge 0$ and $\ubGg{1}=1$. Since $\ubG{{\vecn}}\le 1$, its concavity also implies that $\ubG{{\vecn}}$ is 			
		{\em non-decreasing} with $\CorLG$.
	\item \label{Prop:Majorize} Any quantum distribution $(\zeta_n, \mu_n)$ achievable in the $n$-partite scenario is also achievable in an $n'$-partite scenario whenever $n'>n$.\footnote{To see this, it suffices to have $n$ of the parties sharing an $n$-partite entangled state, while the rest of the parties measure only the trivial observable $\id$.} Thus, for any given $y\in[-1,1]$, $\ub{{n'}}{y}\ge \ub{n}{y}$ and hence $\ub{\vecn}{y} \ge \ub{\nmax}{y}$.
\end{enumerate}
Since $\SqkM{\vecn}$ involves the maximization of a linear function of $\CorAG$ and $\CorLG$, the maximum must occur at one of the extreme points of $\Qn$. Thus, from Eq.~\eqref{Eq:vecnMax} and remark~\ref{Prop:Mono} above, we obtain
\begin{equation}\label{Eq:SqkMvecn}
	\SqkM{\vecn}
	=\max_{\CorLG}\quad 2\ubG{\vecn} -\CorLG,
\end{equation}
and analogously
\begin{equation}\label{Eq:SqkMvecn2}
	\SqkM{\nmax}
	=\max_{\CorLG}\quad 2\ub{\nmax}{\CorLG} -\CorLG.
\end{equation}
From remark~\ref{Prop:Majorize} and assumption~\eqref{Eq:nmax}, we also have
\begin{equation}\label{Eq:Majorization2}
	\SqkM{\vecn}\ge \SqkM{\nmax}.
\end{equation}

\subsection{A lemma relating $\Q_\vecn$ and $\Q_\nmax$}
\label{App:Lemma}
The proof of Theorem~\ref{Thm:Main}, or equivalently Eq.~\eqref{Eq:Goal} then follows from the following lemma and some further characterizations of $\Q_n$.
\begin{lemma}\label{Lemma:Max}
	For $\vecn=\{n_1,n_2,\ldots,n_m\}$ and $-1\le \CorLG\le 0$, $\ubg=\max_i \{ u_{n_i}(\zeta_{\vecn})\}$.
\end{lemma}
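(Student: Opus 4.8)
The plan is to prove the two inequalities $\ubg\ge\max_i\ub{n_i}{\CorLG}$ and $\ubg\le\max_i\ub{n_i}{\CorLG}$ separately. Throughout I use that, by Eq.~\eqref{Eq:Axes2} together with remark~\ref{Prop:Mono}, a point of $\Qn$ with structure $\vecn$ satisfies $\CorAG=\prod_i\CorA{n_i}$ and $\CorLG=\prod_i\CorL{n_i}$ with each $(\CorL{n_i},\CorA{n_i})\in\Q_{n_i}$, so that $\ubg=\max\{\prod_i\ub{n_i}{\CorL{n_i}}:\prod_i\CorL{n_i}=\CorLG\}$. The reduction to the upper boundaries $\ub{n_i}{\cdot}$ is justified by the reflection symmetry of each $\Q_{n_i}$ about the origin: at the maximum the value $\prod_i\CorA{n_i}$ is non-negative, so an even number of the $\CorA{n_i}$ are negative, and reflecting each such factor through the origin makes all $\CorA{n_i}\ge0$ while leaving both $\prod_i\CorL{n_i}$ and $\prod_i\CorA{n_i}$ unchanged; one may then replace each $\CorA{n_i}$ by its maximal value $\ub{n_i}{\CorL{n_i}}$. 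The lower bound is then immediate: taking $\CorL{n_j}=\CorLG$ for a single index $j$ and $\CorL{n_i}=1$ (hence $\ub{n_i}{1}=1$) for all $i\ne j$ is admissible and gives $\ubg\ge\ub{n_j}{\CorLG}$ for every $j$.

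The substance is the matching upper bound for $\CorLG\le0$, which I would reduce to a monotonicity statement for two groups. Fix all but two coordinates and maximize $\ub{n_a}{\CorL{n_a}}\ub{n_b}{\CorL{n_b}}$ under $\CorL{n_a}\CorL{n_b}=w$ with $w\le0$ and, say, $\CorL{n_a}\le0\le\CorL{n_b}$. Writing $\CorL{n_b}=t\in[|w|,1]$ and $\CorL{n_a}=w/t$, both $w/t$ and $t$ increase with $t$, so by the monotonicity of $\ub{n_a}{\cdot}$ and $\ub{n_b}{\cdot}$ (remark~\ref{Prop:Mono}) the two non-negative factors are simultaneously non-decreasing in $t$; their product is therefore maximized at $t=1$, where it equals $\ub{n_a}{w}\ub{n_b}{1}=\ub{n_a}{w}$. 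In words: among an opposite-sign pair with fixed product, it is optimal to push the non-negative coordinate to $1$ and let the remaining negative coordinate carry the entire product.

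With this tool the general case would follow by two successive reductions. First, since $\CorLG\le0$ forces an odd number of the $\CorL{n_i}$ to be negative, I would repeatedly eliminate pairs of negative coordinates: for two negative factors $\CorL{n_a},\CorL{n_b}$ their product $q=\CorL{n_a}\CorL{n_b}\in(0,1]$ satisfies $q>\CorL{n_a}$, so by monotonicity and $\ub{n_b}{\cdot}\le1$ one has $\ub{n_a}{q}\ge\ub{n_a}{\CorL{n_a}}\ub{n_b}{\CorL{n_b}}$; replacing $(\CorL{n_a},\CorL{n_b})$ by $(q,1)$ thus does not decrease the objective, preserves $\CorLG$, and removes two negative coordinates. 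Iterating leaves exactly one negative coordinate, say $\CorL{n_1}$. Second, I would pair this coordinate in turn with each remaining (non-negative) coordinate and apply the two-group result to push each of them to $1$, weakly increasing the objective and preserving $\CorLG$ at every step (each intermediate pairing has non-positive product, and the updated $\CorL{n_1}$ stays in $[-1,0]$). The terminal configuration has $\CorL{n_i}=1$ for all $i\ne1$ and $\CorL{n_1}=\CorLG$, with objective $\ub{n_1}{\CorLG}\le\max_i\ub{n_i}{\CorLG}$, which closes the argument.

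The main obstacle is the two-group monotonicity step together with the sign bookkeeping: everything hinges on $\ub{n_i}{\cdot}$ being non-decreasing (itself a consequence of concavity and $\ub{n_i}{\cdot}\le1$, remark~\ref{Prop:Mono}), and one must check carefully that the pairwise operations genuinely preserve $\prod_i\CorL{n_i}=\CorLG$, keep every coordinate in $[-1,1]$, and only ever increase the objective. I would stress that the restriction $\CorLG\le0$ is essential: it is what guarantees an odd number of negative coordinates and makes every opposite-sign pairing have non-positive product. For $\CorLG>0$ the analogous concentration can fail—balancing the magnitude across two equal groups can beat placing all of it in one—so the argument is genuinely confined to the regime stated in the lemma.
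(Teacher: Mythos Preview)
Your proof is correct and follows essentially the same strategy as the paper's: reduce to a single negative coordinate using the monotonicity of the $u_{n_i}$, then push all non-negative coordinates to $1$. The only cosmetic differences are that the paper eliminates pairs of negative $\zeta_{n_i}$ by simply flipping both signs (rather than replacing the pair by $(q,1)$), and that your claim ``$\CorLG\le0$ forces an odd number of negatives'' needs a one-line patch at $\CorLG=0$ (some factor vanishes, and the bound $\prod_i u_{n_i}(\zeta_{n_i})\le u_{n_j}(0)\le\max_i u_{n_i}(0)$ is then immediate).
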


\begin{proof}

By definition, for any given $\CorLG$, we have
\begin{subequations}
\begin{align}
	\ubg=&\max_{\{\zeta_{n_i}\}} \prod_{i=1}^m \mu_{n_i}(\zeta_{n_i})=\max_{\{\zeta_{n_i}\}} \prod_{i=1}^m u_{n_i}(\zeta_{n_i}), \label{Eq:Obj}\\
	\text{s.t.}\quad \CorLG=& \prod_{i=1}^m \zeta_{n_i}.\label{Eq:zeta_constr}
\end{align}
\end{subequations}
In particular, for any given $\CorLG<0$, we see from Eq.~\eqref{Eq:zeta_constr} that there is a strict subset of $\{\zeta_{n_i}\}_{i=1}^m$ --- which we denote by $\mathcal{Z^-}$ --- such that $\zeta_{n_i}<0$ for all $\zeta_{n_i}\in\mathcal{Z^-}$. Clearly, $\left|\mathcal{Z^-}\right|$, the number of elements in $\mathcal{Z^-}$ must be an odd number. Recall from remark~\ref{Prop:Mono} above that each $u_{n_i}(\zeta_{n_i})$ is monotonically non-decreasing with $\zeta_{n_i}$, thus if $\left|\mathcal{Z^-}\right|\ge 3$, then for any given $\{\zeta_{n_i}\}_{i=1}^m$ that satisfies Eq.~\eqref{Eq:zeta_constr}, we may flip the sign of {\em any pair} of $\zeta_{n_i}\in\mathcal{Z^-}$, and the corresponding value of the objective function defined in Eq.~\eqref{Eq:Obj} is at least as large as before while the new values of $\{\zeta_{n_i}\}_{i=1}^m$ still satisfy Eq.~\eqref{Eq:zeta_constr}. Applying this iteratively, we eventually end up with $\mathcal{Z^-}=\{\zeta_{n_j}\}$ for some $j\in\{1,\ldots,m\}$, i.e., $\mathcal{Z^-}$ is now a singleton set.

Consider, for instance, the case where $\zeta_{n_1}<0$ while $\zeta_{n_i}>0$ for all $i\neq1$. Again, from the monotonicity of the function $u_{n_i}(\zeta_{n_i})$ and the fact that $|\zeta_{n_i}|\le1$ for all $i$, we see that we may increase the value of the product $\prod_{i=1}^m u_{n_i}(\zeta_{n_i})$ in Eq.~\eqref{Eq:Obj} by setting $\zeta_{n_1}\to\CorLG$ and $\zeta_{n_i}\to1$ for all $i\neq1$ while preserving the constraint given in Eq.~\eqref{Eq:zeta_constr}. Since $u_{n_i}(1)=1$ for all $n_i$, with these new values of $\zeta_{n_i}$, we see that the product becomes $\prod_{i=1}^m u_{n_i}(\zeta_{n_i})=u_{n_1}(\CorLG)$. Carrying out similar analysis for all possible singleton sets of $\mathcal{Z}^-$, we thus arrive at:
\begin{equation}\label{Eq:LHS}
	\ubg=\max_i \{u_{n_i}(\CorLG)\}=\ub{\nmax}{\CorLG}
\end{equation}
whenever $\CorLG<0$. The case where $\CorLG=0$ can be treated similarly and thus we arrive at Eq.~\eqref{Eq:LHS} whenever $-1\le \CorLG\le 0$.
\end{proof}

Note that in Eq.~\eqref{Eq:LHS}, the only quantity that matters in $\vecn$ is $\nmax$. Thus, if we can show that it suffices to consider $\CorLG\in[-1,0]$, we will have completed the proof of Theorem~\ref{Thm:Main} via Eq.~\eqref{Eq:Goal}, Eq.~\eqref{Eq:SqkMvecn} and Eq.~\eqref{Eq:SqkMvecn2}. We now  show that this is indeed the case by considering three distinct scenarios.

\subsection{Completing the proof via further characterization of $\Q_\nmax$ (and $\Q_\vecn$)}
\label{App:FurtherCharacterization}
\subsubsection{$\nmax\ge 6$ }

In this case, we see from Eq.~\eqref{Eq:Majorization2} that 
$\SqkM{\vecn}\ge\SqkM{6}>2$  (see 
Table~\ref{Tbl:QuantumMax}) whereas for $\CorLG>0$,
\begin{equation}
	\max_{\CorLG| \CorLG>0}\quad 2\ubG{\vecn} -\CorLG < \max_{\CorLG}\quad 2\ubG{\vecn} \le 2.
\end{equation}
Hence, in the maximization of Eq.~\eqref{Eq:SqkMvecn},  $\SqkM{\vecn}$ is attained only when $\CorLG\le0$. 
 
 \subsubsection{$3\le \nmax\le 5$ }
 
From the hierarchy of semidefinite programs~\cite{QMP:Hierarchy1,QMP:Hierarchy2}, we could certify that for $n=3,4,5$, the following choice of observables 
\begin{gather*}
	A_{x_i=1}=\cos\frac{\pi}{n}\,\sigma_x + \sin\frac{\pi}{n}\,\sigma_y,\nonumber\\
	A_{x_i=0}=\cos\left[\frac{3n+1}{n(n+1)}\pi\right]\,\sigma_x + \sin\left[\frac{3n+1}{n(n+1)}\pi\right]\,\sigma_y,
\end{gather*}
together with the quantum state $\ket{\mbox{GHZ$_n$}}$, give rise to the extreme point of $\Q_n$ at $\zeta_n=-1$. Likewise, the same quantum state in conjunction with the following choice of observables 
\begin{gather*}
	A_{x_i=1}=\cos\frac{\pi}{2n}\,\sigma_x + \sin\frac{\pi}{2n}\,\sigma_y,\nonumber\\
	A_{x_i=0}=\cos\left[\frac{1-n}{2n(n+1)}\pi\right]\,\sigma_x + \sin\left[\frac{1-n}{2n(n+1)}\pi\right]\,\sigma_y,
\end{gather*}
give rise to the extreme point of $\Q_n$ at $\zeta_n=0$. Explicitly, this means that we have
\begin{gather*}
	u_3(-1)= 0.2500,\quad u_4(-1)= 0.3466,\quad u_5(-1)= 0.4219,\\
	u_3(0)= 0.7286,\quad u_4(0)= 0.7781,\quad u_5(0)= 0.8122.
\end{gather*}

Hence, for $\nmax=3,4,5$, we have $u_\nmax(0)-u_\nmax(-1)<\frac{1}{2}$. By Lemma~\ref{Lemma:Max}, we see that for $\CorLG\in[-1,0]$, $\ubG{\vecn}=u_\nmax(\CorLG)$. Recall from remark~\ref{Prop:Mono} above that $\ubG{\vecn}$ is a concave function of $\CorLG$, the lemma below thus allows us to conclude that $\SqkM{\vecn}$ is attained when $\CorLG\le0$.
\begin{lemma}\label{lem:maxleft}
	Let $g(y): [-1,1]\to [0,1]$ be a concave function that satisfies $g(0)-g(-1) <\frac{1}{2}$, then the maximum value of $h(y)=2g(y)-y$ is attained at $y\in[-1,0]$.
\end{lemma}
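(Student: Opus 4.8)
The plan is to prove the sharper statement that $h$ is strictly decreasing on $[0,1]$; since $0\in[-1,0]$, this forces every maximizer of $h$ over $[-1,1]$ to lie in $[-1,0]$, which is exactly the claim. First I would record that the maximum is attained at all: $h=2g-y$ is concave on $[-1,1]$, being the sum of the concave function $2g$ and the affine function $-y$, and it is continuous there (continuity of $g$ holds for the boundary function $u_\nmax$ to which the lemma is applied), so $h$ attains its maximum on the compact interval $[-1,1]$.

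The key step is a secant-slope comparison for the concave function $g$. For a concave function the difference quotient $[g(q)-g(p)]/(q-p)$ does not increase when the interval $[p,q]$ is translated to the right; applied with $p=-1$, $q=0$ against an arbitrary subinterval $[y_1,y_2]\subseteq[0,1]$ (i.e.\ $-1<0\le y_1<y_2\le 1$) this gives
\[
\frac{g(y_2)-g(y_1)}{y_2-y_1}\ \le\ \frac{g(0)-g(-1)}{0-(-1)}\ =\ g(0)-g(-1)\ <\ \tfrac12 .
\]
Rearranging $g(y_2)-g(y_1)<\tfrac12(y_2-y_1)$ yields $2g(y_2)-y_2<2g(y_1)-y_1$, that is $h(y_2)<h(y_1)$; hence $h$ is strictly decreasing on $[0,1]$.

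Finally, for every $y\in(0,1]$ we then have $h(y)<h(0)$, so the maximum of $h$ over $[-1,1]$ cannot be attained at any point of $(0,1]$ and must be attained in $[-1,0]$, as desired. The whole argument hinges on the secant-slope monotonicity of concave functions, which is precisely the device that upgrades the single-secant hypothesis $g(0)-g(-1)<\tfrac12$ into a slope bound over every subinterval of $[0,1]$; I expect this comparison to be the only real content, with the rest being bookkeeping. I would also stress that no differentiability of $g$ is used --- everything runs through difference quotients --- which is important because $u_\nmax$ is known only to be concave and continuous, not smooth.
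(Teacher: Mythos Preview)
Your proof is correct and follows essentially the same approach as the paper. Both arguments rest on the secant-slope inequality for concave functions: the paper applies it once with the three points $-1,0,y$ to conclude $h(y)\le h(0)$ for every $y>0$, whereas you apply it to arbitrary $[y_1,y_2]\subseteq[0,1]$ against $[-1,0]$ to obtain the marginally stronger statement that $h$ is strictly decreasing on $[0,1]$; the additional remark about attainment of the maximum via continuity is a nice piece of bookkeeping not made explicit in the paper.
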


\begin{proof}
Since $g(y)$ is concave in $y$, it follows from the defining property of a concave function that
\begin{align*}
	g(t\,y_3+(1-t)\,y_1)\ge t\,g(y_3)+(1-t)\,g(y_1)
\end{align*}
for arbitrary $t\in(0,1)$. In particular, for any $y_1<y_2<y_3$ with $y_i\in[-1,1]$, let us set $t=\frac{y_2-y_1}{y_3-y_1}\in(0,1)$, and it follows from the above inequality that
\begin{gather*}
	g(y_2)- t\,g(y_3)-(1-t)\,g(y_1) \ge0,\\
	\Rightarrow  g(y_2)(y_3-y_1) -g(y_3)(y_2-y_1)-g(y_1)(y_3-y_2) \ge 0,\\
	\Rightarrow  \left[g(y_2)-g(y_1)\right](y_3-y_2) \ge \left[g(y_3)-g(y_2)\right](y_2-y_1),\\
    \Rightarrow\frac{g ( y_{2} ) -g ( y_{1} )}{y_{2} -y_{1}} \ge
    \frac{g ( y_{3} ) -g ( y_{2} )}{y_{3} -y_{2}} .
\end{gather*}
  For $y_{1} =-1$, $y_{2} =0$ and $y_{3} =y>0$, we obtain from the above inequality that $g (
  y ) -g ( 0 ) \le y [ g ( 0 ) -g ( -1 ) ] <\frac{y}{2}$ where the last inequality follows from our initial assumption. Thus
  $h( y ) - h( 0 ) =2 [ g ( y ) -g ( 0 ) ] -y \le 0$. In other words, the value of the function $h(y)$ for any $y>0$ is smaller than equal to the value of the function at $h(0)$, which proves that the maximum of $h(y)$ must occur at $y\in[-1,0]$.

\end{proof}

\subsubsection{$\nmax=2$}

Let us first note that the boundary of $\Q_2$ can be completely characterized and reads as:
\begin{equation}
	\ub{2}{\zeta_{2}}=\cos^3\left(\frac{\arccos\zeta_{2}}{3}\right).
\end{equation}
\begin{proof}
Let us denote by $A_i$ and $B_j$ the dichotomic observable measured, respectively, by the first and second party. For any quantum state $\rho$ with expectation values,\footnote{To simplify the presentation, we  omit the tensor product symbol $\otimes$ in all subsequent equations.}
\begin{equation}
\begin{split}
	\mu_2&=\frac{1}{4}\tr\left[\rho\,\left(A_0B_0+A_1B_0+A_0B_1+A_1B_1\right)\right],\\
	 \zeta_2&=\tr(\rho\,A_1B_1),
\end{split}
\end{equation}
where $A_0^2=A_1^2=B_0^2=B_1^2=\id$,
it can be verified that
\begin{align}\label{Eq:SOS}
	\cos^3\left(\frac{\arccos\zeta_{2}}{3}\right)&-\mu_2\\ 
	=\frac{1}{16\lambda_+\lambda_-}\Big\{&\lambda_+\tr\left[\rho(\lambda_-A_0 - A_1 +\lambda_-B_0-B_1)^2\right]\nonumber\\
	+&\lambda_-\tr\left[\rho(\lambda_+A_0 + A_1 -\lambda_+B_0-B_1)^2\right]\Big\}\nonumber
\end{align}
where $\lambda_\pm=2\cos\left(\frac{\arccos\zeta_{2}}{3}\right)\pm1$. Since $\zeta_2\in[-1,1]$, $\lambda_\pm\ge 0$ and thus the right-hand-side of Eq.~\eqref{Eq:SOS} is always non-negative. In other words, $\cos^3\left(\frac{\arccos\zeta_{2}}{3}\right)$ is indeed a legitimate upper bound on the expectation value $\mu_2$. Moreover, this upper bound is achievable, for instance, 
 by performing the measurements:
\begin{gather*}
	A_0=B_0=\cos\left(\frac{\arccos\zeta_2}{6}\right)\sigma_x-\sin\left(\frac{\arccos\zeta_2}{6}\right)\sigma_y,\\
	A_1=B_1=\cos\left(\frac{\arccos\zeta_2}{2}\right)\sigma_x+\sin\left(\frac{\arccos\zeta_2}{2}\right)\sigma_y,
\end{gather*}
on the Bell state $\ket{\Phi^+}=\ket{\mbox{GHZ$_2$}}$.  
\end{proof}

Recall from Lemma~\ref{Lemma:Max} that for $\nmax=2$ and for $\CorLG\in[-1,0]$, we have $\ubg=\ub{2}{\CorLG}$. Since the objective function $2\ubg -\CorLG$ corresponds precisely to a hyperplane of gradient $\frac{1}{2}$ on this plane, and that $\ubg$ is a concave function of $\CorLG$, we know that if there is some value of $\CorLG=\CorLG^*$ such that the gradient of $\ubg$ is $\frac{1}{2}$, the maximum of $2\ubG{\vecn} -\CorLG$ is already attained at $\CorLG^*$.\footnote{This is true even though we do not have the explicit characterization of $\ubg$ for $\CorLG>0$.} Indeed, at $\CorLG=-\frac{1}{\sqrt{2}}$, one can verify that the gradient of $\ubg=\ub{2}{\CorLG}$ is exactly $\frac{1}{2}$, which means that $\SqkM{\vecn}$ is attained at $\CorLG<0$.

\section{The $k$-producible non-signaling bounds}
\label{App:NS}

For the set of non-signaling ($\mathcal{NS}$) correlations, it is easy to see that its boundary in the 2-dimensional plane depicted in Fig.~\ref{Fig:2dSlice} is given by:
\begin{equation}
	u_n^{\mbox{\tiny $\mathcal{NS}$}}(\zeta_n)=1+2^{-n}(\zeta_n-1).
\end{equation}
For extremal non-signaling distribution, it is easy to se that Eq.~\eqref{Eq:Axes2} holds.
By applying Lemma~\ref{Lemma:Max} to the non-signaling correlations, we see that for $\CorLG\in[-1,0]$, we must also have
\begin{equation}
	u_{\vecn}^{\mbox{\tiny $\mathcal{NS}$}}(\CorLG)=1+2^{-\nmax}(\CorLG-1).
\end{equation}
Since this segment of the boundary is a straight line with gradient $2^{-\nmax}<\frac{1}{2}$ for all $\nmax>1$, and by the concavity of $u_{\vecn}^{\mbox{\tiny $\mathcal{NS}$}}$, we know that 
\begin{equation}\label{Eq:NSBound}
\begin{split}
	\max_{\mbox{\tiny k-pr. $\mathcal{NS}$}} \Sn 
	= & \max_{\CorLG}\quad 2u_{\vecn}^{\mbox{\tiny $\mathcal{NS}$}}(\CorLG)-\CorLG,\\ 
	= & 2u_{\vecn}^{\mbox{\tiny $\mathcal{NS}$}}(-1)+1=3-2^{2-\nmax},
\end{split}
\end{equation}
where $k=\nmax$. Since $\I_n$ is an inequality involving only full correlators, it is worth noting~\cite{Curchod:IP} that the bound derived in Eq.~\eqref{Eq:NSBound} also holds for more powerful (e.g. signaling) resource.

\section{Maximal possible quantum violation of the MABK inequality by $k$-producible states}
\label{App:MABK}

Here, we give a proof that the maximal possible quantum violation of the MABK inequality by 2-producible quantum states is $\sqrt{2}$ regardless of $n$. To see this, it suffices to consider a partition of the $n$ parties into groups of 2 parties whenever possible. If $n$ is even, we then have $m=\frac{n}{2}$ and $L=0$, the result of Nagata~{\em et al.}~\cite{Nagata:PRL:2002} then implies that the maximal possible quantum violation in this case is $2^{(n+L-2m+1)/2}=\sqrt{2}$. Similarly, if $n$ is odd, we have $m=\frac{n+1}{2}$ and $L=1$, giving also the same bound of $\sqrt{2}$. Consequently, the witness given by inequality~\eqref{Ineq:MABKWitness} is also legitimate for $k=2$ and arbitrary $n\ge k$. The bound for $(n-1)$-producible states, namely, $2^{(n-2)/2}$ also follows straightforwardly from similar analysis (see~\cite{ANL} for an alternative proof).

For larger value of $k$ with $\lceil \frac{n}{2} \rceil \le k<n-1$, it is easy to see that the maximal possible quantum violation of the MABK inequality is achieved by having the $n$ parties separated into a group of size $k$ and a group of size $n-k$. In this case, $m=2$, $L=0$ (for $n>2$) and we have the maximal possible MABK-inequality violation of $2^{(n-3)/2}$ for all $k$ with $\lceil \frac{n}{2} \rceil \le k<n-1$. As a result, the inequality~\eqref{Ineq:MABKWitness} generally does not hold for $n\ge6$. The table below gives a clear illustration of this fact for $n=6$ ($k=3,4$) as well as $n=7$ ($k=4,5$).

\begin{table}[h!]
\begin{ruledtabular}
	\begin{tabular}{c|c|c|c|c|c}
	 $n$  & $m$  & $L$ & Partition size & $2^{(n+L-2m+1)/2}$ & ED \\ \hline
	 3      & 1 	 & 0     & $\{3\}$   & $2$                  & 3 \\ \hline
	 4      & 1 	 & 0     & $\{4\}$   & $2\sqrt{2}$                  & 4 \\ 
	 4      & 2 	 & 1     & $\{3,1\}$   & $2$                  		& 3 \\ \hline
	 5      & 1 	 & 0     & $\{5\}$   & $4$                  	& 5 \\ 
	 5      & 2 	 & 1     & $\{4,1\}$   & $2\sqrt{2}$                 		& 4 \\ 
	 5      & 2 	 & 0     & $\{3,2\}$   & $2$                  & 3 \\ \hline
	 6      & 1 	 & 0     & $\{6\}$   & $4\sqrt{2}$                  	& 6 \\ 
	 6      & 2 	 & 1     & $\{5,1\}$   & $4$                 		& 5 \\ 
	 6      & 2 	 & 0     & $\{4,2\}$   & $2\sqrt{2}$                  & 4 \\ 
	 6      & 2 	 & 0     & $\{3,3\}$   & $2\sqrt{2}$                  & 3 \\ \hline
	 7      & 1 	 & 0     & $\{7\}$   & $8$                  	& 7 \\ 
	 7      & 2 	 & 1     & $\{6,1\}$   & $4\sqrt{2}$                 		& 6 \\ 
	 7      & 2 	 & 0     & $\{5,2\}$   & $4$                  		& 5 \\ 
	 7      & 2 	 & 0     & $\{4,3\}$   & $4$                  		& 4 \\ 
	 7      & 3 	 & 1     & $\{3,3,1\}$   & $2\sqrt{2}$                  & 3 \\  
	\end{tabular}
	\caption{Maximal possible MABK-inequality violation   by $n$-partite quantum states admitting different equivalences classes of partitions --- parameterized by $m$, the number of groups, and $L$, the number of unentangled subsystems. Different classes are specified by different combinations of $m$ integers (separated by commas) in a curly bracket, each representing the number of constituent subsystems in a group. For instance, the partition size $\{3,2\}$ represents a 5-partite quantum state that is formed by the tensor product of a tripartite quantum state and a bipartite quantum state. Here, for each given value of an entanglement depth (ED), only one class of partition achieving the maximum MABK-inequality violation is shown; also we omit partitions corresponding to  ${\rm ED}\le2$.}
\end{ruledtabular}		
\end{table}

\section{Explicit values of quantum violation}
\label{App:QuantumViolation}

\begin{table}[h!]
\begin{ruledtabular}
	\begin{tabular}{c|cccccc}
		$n$  &  $2$ & $3$ 
		& $4$ & $5$ & $6$ & 7 \\ \hline
		 $\ket{\mbox{GHZ$_n$}}$ &   $\sqrt{2}$  & $\frac{5}{3}$ & 1.8428  &  1.9746  &   2.0777  & 2.1610
\vspace{0.1cm}\\ 
		 $\ket{\mbox{W$_n$}}$ &  $\sqrt{2}$  & 1.3631 & 1.3633  &  1.3656  &   1.3677   & 1.3693 \vspace{0.1cm}\\
		 $\ket{C^-_n}$ &   $\sqrt{2}$  & $\frac{5}{3}$ & $\sqrt{2}$  &  $\sqrt{2}$  &  $\sqrt{2}$ &  $\sqrt{2}$   \vspace{0.1cm}\\
		 $\ket{C^o_n}$ &   $\sqrt{2}$  & $\frac{5}{3}$ & $\sqrt{2}$  &  $1.1535$  &  $1.1583$   & 1.1563 \vspace{0.1cm}\\ \hline
		 $\ket{\mbox{GHZ$_n$}}$ &   $\sqrt{2}$  & $2$ & $2\sqrt{2}$  &  4 & $4\sqrt{2}$ & 8  \vspace{0.1cm}\\
		 $\ket{\mbox{W$_n$}}$ &     $\sqrt{2}$  & 1.5230 & 1.5543  & 1.5698  &  1.5794 & 1.5859  \vspace{0.1cm}\\
		 $\ket{C^-_n}$ &     $\sqrt{2}$  & 2 & $\sqrt{2}$  &  $\sqrt{2}$  &  $\sqrt{2}$   &  $\sqrt{2}$   \vspace{0.1cm}\\
		 $\ket{C^o_n}$ &     $\sqrt{2}$  & {2} & $\sqrt{2}$  &  $1$  &  1 &  1  \vspace{0.1cm}\\		 
		\end{tabular}
		\caption{Best quantum violation of $\I_n$ (top block) and the MABK inequality (bottom block) found for the GHZ state, the W-state, and the 1-dimensional cluster states $\ket{C^-_n}$ and $\ket{C^o_n}$. These quantum violations were obtained with the help of the algorithm described in~\cite{Liang:PRA:2007}.}
\end{ruledtabular}		
		\end{table}

\section{Useful semidefinite programs}
\label{App:SDP}

Here, we provide details of the semidefinite program (SDP) alluded to in the main text.
In~\cite{Moroder:PRL:PPT}, it was shown that a certain matrix of expectation values $\chi$ containing experimentally accessible quantities $\{P(\vec{a}|\vec{x})\}$ can be seen as the result of a local completely-positive map $\Lambda$ acting on the underlying quantum state $\rho$. Since {\em no} local mapping can increase entanglement depth (i.e., it cannot make a $k$-producible $\rho$ not $k$-producible), if $\chi$ is not $k$-producible, so is the underlying state $\rho$. In other words, certifying that $\chi$ is not $k$-producible also certifies that the quantum state that gives rise to the correlation $\{P(\vec{a}|\vec{x})\}$ must have an entanglement depth of at least $k+1$. To this end, let us remind that if a quantum state is separable with respect to a certain partitioning of the system into subsystems, the corresponding partial transposition(s)~\cite{Peres:PRL:1996} of the quantum state must remain positive semidefinite. These observations together then allow us to (1) upper-bound the maximal possible quantum violation of {\em any} given (linear) Bell inequality by $k$-producible quantum states, and (2) certify  (through the relaxation of partial transposition mentioned above) that some given correlation $\{P(\vec{a}|\vec{x})\}$ does not originate from $k$-producible quantum states.

\subsection{SDP for upper bounding quantum violation by $k$-producible states}

Recall from~\cite{Moroder:PRL:PPT} that at any given level (say, $\ell$) of the hierarchy considered therein, we consider a matrix $\chi_\ell[\rho]$ that can be decomposed as:
\begin{equation}\label{Eq:Decomposition}
	\chi_\ell[\rho]=\sum_{\vec{a},\vec{x}} P(\vec{a}|\vec{x})\,F^\ell_{\vec{a},\vec{x}} + \sum_v u_v F^\ell_v,
\end{equation}
i.e., into one fixed part that linearly depends on the experimentally accessible quantities $\{P(\vec{a}|\vec{x})\}$, and a complementary (orthogonal) part that is known only if the underlying state $\rho$ and the measurement giving rise to the correlation $\{P(\vec{a}|\vec{x})\}$ is known; in Eq.~\eqref{Eq:Decomposition}, $F^\ell_{\vec{a},\vec{x}}$ and $F^\ell_v$ are some fixed, symmetric, Boolean matrices~\cite{Moroder:PRL:PPT}.

For any {\em given} partition $\mathcal{P}$ of the $n$ parties into subsets  of $m$ groups satisfying Eq.~\eqref{Eq:nmax} and for any given Bell inequality specified by $\sum_{\vec{a},\vec{x}}\beta^{\vec{x}}_{\vec{a}}P(\vec{a}|\vec{x})$, the $\ell$-level upper bound that we desire can be obtained by solving the following  optimization problem:
\begin{align}
	\max_{\{P(\vec{a}|\vec{x})\},\{u_v\}}\quad 
	&\sum_{\vec{a},\vec{x}}\beta^{\vec{x}}_{\vec{a}}P(\vec{a}|\vec{x}),\label{Eq:SDPMaxViolation}\\
	{\rm s.t.} \quad\qquad &\chi^{\phantom{T_j}}_\ell\!\!\![\rho]\ge 0, \quad \chi_\ell^{\mbox{\tiny T$_j$}}\ge0\quad\forall\, j\in{\rm Com}(\mathcal{P})\nonumber
\end{align}
where ${\rm Com}(\mathcal{P})$ refers to the set of all groupings of parties that are compatible with $n$ and $k$. For example, when $n=6$ and $k=3$, a possible partition of 7 parties is given by $\mathcal{P}=\{\{1,2,3\},\{4,5,6\},\{7\}\}$, i.e., the first 3 parties are entangled, and the same applies for the next 3. In this case, the constraints in Eq.~\eqref{Eq:SDPMaxViolation} would include the requirement that the partial transpositions $\chi_\ell^{\mbox{\tiny T$_{123}$}}$, $\chi_\ell^{\mbox{\tiny T$_{456}$}}$, $\chi_\ell^{\mbox{\tiny T$_{7}$}}$ are positive semidefinite. Since the optimization problem given in Eq.~\eqref{Eq:SDPMaxViolation} only involves matrix positivity constraints and the objective function is linear in the matrix entries, it is thus an SDP.  The desired upper bound is then obtained by taking the maximum of all such upper bounds (returned by the SDPs) when considering all different partitions consistent with the assumption of Eq.~\eqref{Eq:nmax}. 

\subsection{SDP for determining if a given correlation can be produced by $k$-producible states}

For any given correlation $P_{\mbox{\tiny obs}}(\vec{a}|\vec{x})$, determining its compatibility with $k$-producible states can be achieved, instead, by solving the following SDP:
\begin{equation}\label{Eq:SDPk-prod}
\begin{split}
	{\rm Find}\quad  &\{u_v\},\{P^{(j)}(\vec{a}|\vec{x})\},\{u_v^{(j)}\}\\
	{\rm s.t.} \quad 
	&\chi_\ell=\sum_{\vec{a},\vec{x}} P_{\mbox{\tiny obs}}(\vec{a}|\vec{x})\,F^\ell_{\vec{a},\vec{x}} + \sum_v u_v F^\ell_v\ge 0, \\
	&\chi_\ell=\sum_{j} \chi_\ell^{\P_j},\\
	&\chi_\ell^{\P_j}=\sum_{\vec{a},\vec{x}} P^{(j)}(\vec{a}|\vec{x})\,F^\ell_{\vec{a},\vec{x}} + \sum_v u_v^{(j)} F^\ell_v\ge0, \\
	&\left(\chi_\ell^{\P_j}\right)^{\mbox{\tiny T$_i$}}\ge0 \quad  \forall\, i\in{\rm Com}(\P_j),
\end{split}
\end{equation}
where $\{\P_j\}$ is the set of all possible partitions of $n$ parties that are consistent with Eq.~\eqref{Eq:nmax}.

\section{Generalization of $\I_n^k$ to $\I_n^k(\gamma)$}
\label{App:OtherBellIneq}

Consider the Bell expression:
\begin{equation}\label{Eq:SnGamma}
	\Sn^\gamma=\frac{\gamma}{2^n}\sum_{\vec{x}\in\{0,1\}^n} E_n(\vec{x})-E_n(\vec{1}_n). 
\end{equation}
Denote by $\S^{\Q,*}_{n,\gamma}$ and $\S^{\mbox{\tiny $\Q$},*}_{\mbox{\tiny $k$-pr.}, \gamma}$ the maximal possible quantum value of $\Sn^\gamma$ achievable by, respectively,  $n$-partite quantum states in general and those which are $k$-producible. Here, we give a proof that for $0<\gamma\le 2$,  $\S^{\mbox{\tiny $\Q$},*}_{\mbox{\tiny $k$-pr.}, \gamma}=\S^{\Q,*}_{k,\gamma}$, which generalizes Theorem~\ref{Thm:Main} (corresponding to to the case of $\gamma=2$) to a one-parameter family of Bell expressions. Essentially, this follows from the fact that in Fig.~\ref{Fig:2dSlice}, $\Sn^\gamma$ with $\gamma<2$ corresponds to a steeper hyperplane compared to what we have already established for $\gamma=2$. And intuitively, with a steeper objective function, the maximizer cannot move to the right, thus allowing us to conclude with the help of Lemma~\ref{Lemma:Max}.

To this end, we shall first prove a lemma that allows us to relate the maximizer(s) of $\S^{\mbox{\tiny $\Q$},*}_{\mbox{\tiny $k$-pr.}, \gamma}$ for different values of $\gamma$. 
Let $g(y):\mathbb{R}\rightarrow\mathbb{R}$ be a bounded, real-valued function defined on some subset of $\mathbb{R}$. Consider now the function $h_s(y)=g(y)-s\,y$ and let $y_s$ be a maximizer of $h_s(y)$, i.e., $h_s(y_s)$ is a global maximum of $h_s(y)$, then the following lemma relates the maximizers for different values of $s$.

\begin{lemma}\label{Lem:ToTheLeft}
For $s'>s>0$ and $y>y_s$, it holds that $h_{s'}(y_s)-h_{s'}(y)>0$.
\end{lemma}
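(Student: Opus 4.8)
The plan is to prove Lemma~\ref{Lem:ToTheLeft} by exploiting the optimality of $y_s$ for $h_s$ together with an elementary algebraic manipulation that converts the difference $h_{s'}(y_s)-h_{s'}(y)$ into a manifestly positive quantity. First I would write out both $h_{s'}(y_s)$ and $h_{s'}(y)$ explicitly using $h_{s'}(z)=g(z)-s'z$, giving
\begin{equation}
	h_{s'}(y_s)-h_{s'}(y) = \left[g(y_s)-g(y)\right] - s'(y_s-y).
\end{equation}
The key idea is to split the coefficient $s'$ as $s'=s+(s'-s)$ so that the slow part reconstructs the difference of $h_s$ values, which optimality controls, while the extra part supplies a strictly positive contribution.

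Carrying this out, I would rewrite the above as $\left[g(y_s)-g(y)\right]-s(y_s-y)-(s'-s)(y_s-y)$, and recognize the first three terms as $h_s(y_s)-h_s(y)$. Since $y_s$ is by definition a global maximizer of $h_s$, we have $h_s(y_s)-h_s(y)\ge 0$. The remaining term is $-(s'-s)(y_s-y)=(s'-s)(y-y_s)$, which is \emph{strictly} positive because $s'>s$ (so $s'-s>0$) and $y>y_s$ (so $y-y_s>0$). Adding a non-negative quantity to a strictly positive one yields $h_{s'}(y_s)-h_{s'}(y)>0$, as claimed. In display form the decomposition reads
\begin{equation}
	h_{s'}(y_s)-h_{s'}(y) = \underbrace{\left[h_s(y_s)-h_s(y)\right]}_{\ge 0} + \underbrace{(s'-s)(y-y_s)}_{>0} > 0.
\end{equation}

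The proof is genuinely short and I do not anticipate a serious obstacle; the only subtlety worth stating carefully is that the argument uses \emph{only} the global optimality of $y_s$ for $h_s$ and makes no appeal to concavity, differentiability, or uniqueness of maximizers, so the hypothesis that $g$ is merely bounded and real-valued suffices. The role $s>0$ plays is minor here (it is inherited from the intended application where $s=\gamma^{-1}$-type slopes are positive), and what actually drives the inequality is the sign of $s'-s$ together with $y>y_s$. I would emphasize that this lemma is exactly the statement that increasing the slope cannot push a maximizer to the right, which is the geometric intuition needed to extend Theorem~\ref{Thm:Main} to the family $\I_n^k(\gamma)$ for $0<\gamma\le 2$ via Lemma~\ref{Lemma:Max}.
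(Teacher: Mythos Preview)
Your proof is correct and follows essentially the same approach as the paper: both use the global optimality of $y_s$ to get $g(y_s)-g(y)+s(y-y_s)\ge 0$ and then observe that replacing $s$ by the larger $s'$ adds the strictly positive term $(s'-s)(y-y_s)$. The only cosmetic difference is that you package this as an explicit decomposition $h_{s'}(y_s)-h_{s'}(y)=[h_s(y_s)-h_s(y)]+(s'-s)(y-y_s)$, whereas the paper writes it as a two-step chain of inequalities; your side remark that the hypothesis $s>0$ is not actually used in the argument is also accurate.
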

\begin{proof}
Since $y_s$ gives rise to a global maximum of $h_s$, we have that for all $y$ in the domain of $g(y)$,
\begin{align}
& h_s(y_s)-h_s(y)\geq 0\nonumber\\
\Leftrightarrow\,\, &g(y_s)-g(y)+s(y-y_s)\geq 0.
\label{Eq:GlobalMax}
\end{align}
Then for $y>y_s$ and $s'>s>0$
\begin{equation}
\begin{split}
h_{s'}(y_s)-h_{s'}(y)&=g(y_s)-g(y)+s'(y-y_s)\\
&>g(y_s)-g(y)+s(y-y_s)\\
&\geq 0,
\end{split}
\end{equation}
where the first inequality follows from the assumptions that $s'>s$ and $y>y_s$ and the second follows Eq.~\eqref{Eq:GlobalMax}.
\end{proof}

The above lemma then allows us to show that the DIWED that we provided in Eq.~\eqref{Eq:EDWitnesses} is actually a special case of the more general family of witnesses:
\begin{equation}\label{Eq:EDWitnessesGeneralized}
	\I_n^k(\gamma): \frac{\gamma}{2^n}\sum_{\vec{x}\in\{0,1\}^n} E_n(\vec{x})-E_n(\vec{1}_n)\,  \stackrel{\stackrel{\mbox{\tiny $k$-producible}}{\mbox{\tiny states}}}{\le}  \S^{\Q,*}_{k,\gamma},
\end{equation}
where $0<\gamma\le 2$. In the notations of Appendix~\ref{App:ProofUB}, proving inequality~\eqref{Eq:EDWitnessesGeneralized} amounts to proving:
\begin{equation}\label{Eq:Goal2}
	\max_\vecn \max_{\CorLG}\quad \ubG{\vecn} -\gamma^{-1}\CorLG\quad 
	= \max_{\zeta_k}\quad u_k(\zeta_k)-\gamma^{-1}\zeta_k,
\end{equation}
under the assumption of Eq.~\eqref{Eq:nmax}. Recall from the proof of Theorem~\ref{Thm:Main} that for any given $\vecn$, the maximizer(s) of $\ubG{\vecn} -\frac{1}{2}\CorLG$ occur at $\CorLG\in[-1,0]$. Now, consider  $\gamma^{-1}> \tfrac{1}{2}$, Lemma~\ref{Lem:ToTheLeft} implies that the global maximizer of $\ubG{\vecn} -\frac{1}{\gamma}\CorLG$ also occur at $\CorLG\in[-1,0]$. An immediate application of Lemma~\ref{Lemma:Max} then allows us to conclude Eq.~\eqref{Eq:Goal2}, which in turn implies $\S^{\mbox{\tiny $\Q$},*}_{\mbox{\tiny $k$-pr.}, \gamma}=\S^{\Q,*}_{k,\gamma}$ and hence inequality~\eqref{Eq:EDWitnessesGeneralized}.

\end{document}